\def\BibTeX{{\rm B\kern-.05em{\sc i\kern-.025em b}\kern-.08em
    T\kern-.1667em\lower.7ex\hbox{E}\kern-.125emX}}
\newtheorem{definition}{Definition}
\newtheorem{theorem}{Theorem}
\newtheorem{corollary}{Corollary}
\newtheorem{remark}{Remark}
\newcommand{\sx}{\mkern-5mu}
\begin{document}

\begin{frontmatter}
\title{A Novel Sparse Sum and Difference Co-Array With Low Redundancy and Enhanced DOF for Non-Circular Signals}

\author[swu]{Si Wang}
\author[swu]{Guoqiang Xiao\corref{cor1}}
\address[swu]{College of Computer and Information Science, Southwest University, Chongqing 400715, China}
\cortext[cor1]{Corresponding author}
\ead{hw8888@email.swu.edu.cn\ (Si Wang), gqxiao@swu.edu.cn}

\begin{abstract}
Array structures based on the sum and difference co-arrays provide more degrees of freedom (DOF).
However, since the growth of DOF is limited by a single case of sum and difference co-arrays,
the paper aims to design a sparse linear array (SLA) with higher DOF via
exploring different cases of second-order cumulants.
We present a mathematical framework based on second-order cumulant to devise a second-order extended co-array (SO-ECA)
and define the redundancy of SO-ECA.
Based on SO-ECA, a novel array is proposed, namely low redundancy sum and difference array (LR-SDA),
which can provide closed-form expressions for the sensor positions
and enhance DOF in order to resolve more signal sources in the direction of arrival (DOA) estimation of non-circular (NC) signals.
For LR-SDA, the maximum DOF under the given number of total physical sensors can be derived and the SO-ECA of LR-SDA is hole-free.
Further, the corresponding necessary and sufficient conditions of signal reconstruction for LR-SDA are derived.
Additionally, the redundancy and weight function of LR-SDA are defined, and the lower band of the redundancy for LR-SDA is derived.
The proposed LR-SDA achieves higher DOF and lower redundancy than those of existing DCAs designed based on sum and difference co-arrays.
Numerical simulations are conducted to verify the superiority of LR-SDA on DOA estimation performance and enhanced DOF over other existing DCAs.
\end{abstract}

\begin{keyword}
 Sparse linear array, second-order cumulants, conditions of signal reconstruction, weight function, redundancy, direction of arrival estimation.
\end{keyword}

\end{frontmatter}


\section{Introduction}
Low cost sampling in intelligent perception is widely applied in many fields such as frequency estimation in the time domain \cite{Xiao2017notes}, \cite{Xiao2018robustness}, \cite{Xiao2016symmetric}, \cite{Xiao2021wrapped}, \cite{Xiao2023on}
and DOA estimation in the spatial domain.
This paper mainly focuses on DOA estimation of NC signals in array signal processing,
which is a fundamental problem studied for several decades \cite{Krim1996}, \cite{Godara1997}, \cite{Tuncer2009}, \cite{Xiao22023}.
It is well known that a uniform linear array (ULA) with $N$-sensors is used to estimate ($N-1$) sources using
DOA estimation methods such as MUSIC \cite{Schmidt1986} or ESPRIT \cite{Roy1989}.
To increase the DOF of ULA, more sensors are required, thus leading to a higher cost in practical applications.
However, nonuniform linear arrays (also known as SLAs ) offer an effective solution to these problems.
For an $N$-sensors sparse array,
the corresponding difference co-array (DCA) and sum co-array (SCA) can be constructed with the second-order cumulant (SOC) of the received NC signals,
which can provide $\mathcal{O}(N^2)$ consecutive lags, respectively \cite{Pal2010}, \cite{Pal2011}.
In this way, DOF can be increased significantly compared to traditional ULAs \cite{BD2017mutual}.

For the SLA research based on DCA,
minimum redundancy array (MRA) \cite{Moffet1968} is a foundational structure in order to obtain as large DOF as possible
by reducing redundant sensors.
However, the non-closed form expressions for sensor positions hinder MRA's scalability and complicate large-scale array design.
This limitation leads to extensive research on nested arrays \cite{Pal2010} and coprime arrays \cite{Pal2011},
which offer significant advantages with their closed-form expressions for the sensor locations.
The success of nested arrays and coprime arrays inspires further developments aimed at enhancing DOF,
including augmented coprime arrays \cite{Pal22011},
enhanced nested arrays \cite{Zhao2019} and arrays based on the maximum element spacing criterion \cite{Zheng2019}.
With the respect of DOA estimation,
traditional subspace-based methods \cite{Liu2015} only utilize the consecutive lags of the DCA, making a hole-free configuration advantageous.
Consequently, hole-filling strategies have been proposed to create new coprime arrays-like with a hole-free DCA \cite{Wang2019}.
Obviously, various DCAs based on SOC have been widely studied in DOA estimation because of its significantly enhanced DOF \cite{Pal2010}, \cite{Pal2011}, \cite{Zhao2019}, \cite{Zheng2019}, \cite{Shi2022}.

In addition, in recent years, numerous representative studies have focused on the exploration of SCA.
In \cite{Robin2017}, concatenated nested array (CNA) with a consecutive SCA is proposed,
which can provide more DOF with the same number of physical sensors compare to ULA.
Closed-form expressions of the physical sensors and number of unit inter-spacing between sensors in CNA are introduced.
Since the CNA uses significantly fewer sensors than the ULA provide the same DOF, considerable cost reductions may be achieved.
In \cite{Robin2020} , larger sparse array than CNA with consecutive SCA is proposed, namely Kl$\phi$ve array,
which can further decrease the redundancy than that of CNA.
A novel sparse linear array with consecutive SCA is proposed in \cite{Robin2021}, which is interleaved wichmann array.
There are only a few closely spaced sensors in interleaved wichmann array, which may make it more robust to mutual coupling effects.

Although SLAs designed based on DCA and SCA can greatly increase the number of  consecutive lags,
using a single SOC case to design SLAs still limits the increase in the number of consecutive lags for SLAs.
Therefore, in order to increase the number of consecutive lags for virtual array obtained by using SOC to a greater extent,
a framework of second-order extended co-array (SO-ECA) is proposed in the paper with combining four cases of SOC for received NC signals.
Furthermore, based on SO-ECA, we propose a low redundancy sum and difference array (LR-SDA) with hole-free co-arrays and closed-form expressions of sensor positions.
Additionally, to design a SLA structure with enhanced DOF by using different cases of SOC,
we introduce the criterion of forming more consecutive lags of designing SLA as follows.


\emph{Criterion 1 (Large consecutive lags of DCA and SCA)}:
The large consecutive lags of DCA and SCA are preferred, which can increase the number of resolvable sources in the DOA estimation \cite{Pal2010}, \cite{Piya2012}, \cite{Shen2019}.

\emph{Criterion 2 (Closed-form expressions of sensor positions)}:
A closed-form expression of sensor positions is preferred for scalability considerations \cite{Cohen2019}, \cite{Cohen2020}.

\emph{Criterion 3 (Hole-free DCA and SCA)}:
A SLA with a hole-free DCA and SCA is preferred,
since the data from its DCA and SCA can be utilized directly by subspace-based DOA estimation methods
which are easy to be implemented, and thus the algorithms based on compressive sensing \cite{Shen2015}, \cite{Shen20152}, \cite{Shen2017} or co-array interpolation techniques \cite{Cui2019}, \cite{Zhou2018} with increased computational complexity can be avoided \cite{Cohen2020}, \cite{Liu20172}.

\emph{Contribution:}
The paper focuses on the design of SLA in order to get hole-free SO-ECA based on SOC with enhanced DOF.
The main contributions of the paper are threefold.\par
$\bullet$ In the paper, an effective framework of SO-ECA is devised mathematically based on four cases of SOC for NC signals,
which can provide higher DOF than those of other SLAs designed based on DCA.
Further, the redundancy of SO-ECA is defined.
\par
$\bullet$ A novel LR-SDA is systematically designed based on SO-ECA, utilizing three ULAs placed side-by-side.
For the proposed LR-SDA with the given number of physical sensors,
the closed-form expressions of the physical sensor positions have been derived analytically,
and the DOF of LR-SDA is further enhanced by improving the configuration of the physical sensors among the three ULAs.
Consequently, the proposed LR-SDA offers significantly higher DOF and lower redundancy than those of other existing similar SLAs \cite{GuptaP2018}, \cite{WangY2020}, \cite{Yang2023}.
\par

$\bullet$ The necessary and sufficient conditions of signal reconstruction for LR-SDA are derived.
Additionally, the corresponding weight function and redundancy of LR-SDA are defined in the paper.

The paper is organized as follows. In Section II, we briefly introduce the general sparse array model.
In Section III, a mathematical framework is presented to derive a SO-ECA associated with four cases of SOC for NC signals.
The novel LR-SDA designed based on SO-ECA is proposed in Section IV,
which provides closed-form expressions for physical sensor locations by using SCA and DCA.
Furthermore, we explain that the SO-ECA of proposed LR-SDA is hole-free and calculate the corresponding maximum DOF.
And the necessary and sufficient conditions of signal reconstruction are derived in Section V.
The weight function and redundancy of LR-SDA are defined in Section VI and VII, respectively.
In Section VIII, several numerical simulations are presented to evaluate the RMSE of LR-SDA
compared to the other DCAs with respect to SNR, snapshots and the number of sources.

\textit{Notations:}
$\mathbb{S}$ is the physical sensor positions set of a SLA.
$N$ is the number of sensors. $D$ is the number of source signals to be estimated. $K$ is the number of snapshots.
$\Phi$ is sensor positions set of a SO-ECA.
The operators $\otimes$, $\odot$, $(\cdot)^T$, $(\cdot)^H$ and $(\cdot)^*$ stand for the Kronecker products,
Khatri-Rao products,
transpose, conjugate transpose and complex conjugation, respectively.
Set $\{a\sx : \sx b \sx:\sx c \}$ denotes the integer line from $a$ to $c$ sampled in steps of $b\in \mathbb{N}^+$.
When $b\sx=\sx1$, we use shorthand $\{a \sx:\sx c \}$ .

\section{Preliminaries}

\subsection{General Sparse Array Model}
Assume that there are $D$ non-Gaussian, non-circular and mutually uncorrelated far-field narrow band signals.
The incident angle of the $i^{th}$ signal is $\theta_i$,
and the physical sensor positions set of the SLA is represented as $\mathbb{S}=\{ p_1,p_2,\ldots,p_N \}\cdot d$,
where the unit spacing $d$ is generally set to half wavelength.
The array output at the $n^{th}$ physical sensor corresponding to the $t^{th}$ snapshot,
denoted as $x_n(t)$, can be expressed as follows
\begin{equation}
\label{w1}
\begin{aligned}
x_n(t)=\sum_{i=1}^Da_n(\theta_i)s_i(t)+n_n(t),
\end{aligned}
\end{equation}
where $a_n(\theta_i)$ denotes the steering response of $n^{th}$ physical sensor corresponding to the $i^{th}$ source signal,
which can be expressed as follows
\begin{equation}
\label{w8}
a_n(\theta_i)=e^{j\frac{2\pi p_{l_n} d}{\lambda}\sin(\theta_i)}.
\end{equation}

Further, $n_n(t)$ denotes a zero-mean additive Gaussian noise sample at the $n^{th}$ physical sensor, which is assumed to be
statistically independent of all the source signals.
Thus, the received signals from all $N$ physical sensors, denoted as $\boldsymbol{x}(t)=[x_1(t),\ldots,x_N(t)]^T$, can be expressed as follows
\begin{equation}
\label{w2}
\boldsymbol{x}(t)=\sum_{i=1}^D\boldsymbol{a}(\theta_i)s_i(t)+\boldsymbol{n}(t)=\boldsymbol{A}(\theta)\boldsymbol{s}(t)+\boldsymbol{n}(t),
\end{equation}
where $\boldsymbol{s}(t)=[s_1(t),\ldots,s_D(t)]^T$ denotes the source signals vector,
$\boldsymbol{n}(t)=[n_1(t),\ldots,n_N(t)]^T$ denotes the additive Gaussian noise vector,
and $\boldsymbol{a}(\theta_i)=[a_1(\theta_i),\ldots,a_N(\theta_D)]^T$ denotes the array steering vector corresponding
to the $i^{th}$ source signal and $\boldsymbol{A}(\theta)=[\boldsymbol{a}(\theta_1),\ldots,\boldsymbol{a}(\theta_D)]\in\mathbb{C}^{N\times D}$
denotes the array manifold matrix. Setting $p_1 = 0$, $\boldsymbol{a}(\theta_i)$ can be written as
\begin{equation}\nonumber
\boldsymbol{a}(\theta_i)=[1,e^{-j\frac{2\pi p_2d\sin\theta_i}{\lambda}},\ldots,e^{-j\frac{2\pi p_Nd\sin\theta_i}{\lambda}}]^T.
\end{equation}

For $K$ numbers of snapshots, (\ref{w2}) can be rewritten in matrix form as follows
\begin{equation}\nonumber
\boldsymbol{X}=\boldsymbol{A}\boldsymbol{S}+\boldsymbol{N},
\end{equation}
where $\boldsymbol{X}=[\boldsymbol{x}(1),..., \boldsymbol{x}(D)] \in\mathbb{C}^{N\times D}$ is the received signals matrix,
$\boldsymbol{S}=[\boldsymbol{s}(1),..., \boldsymbol{s}(D)] \in\mathbb{C}^{D\times K}$ is the source signals matrix, 
and $\boldsymbol{N}=[\boldsymbol{n}(1),..., \boldsymbol{n}(D)] \in\mathbb{C}^{N\times K}$ is the additive Gaussian noise matrix.

Before describing the proposed array structure, we firstly introduce the definitions  of SCA and DCA  for the completeness of this paper.

\begin{definition}
 (SCA \cite{Robin2017}): Given a physical sensor position set, $\mathbb{S }=\{p_{l_1}, p_{l_2},..., p_{l_N}\}\cdot d$, of a  SLA,
 the following set determines the sensor positions of a SCA
  \begin{equation}
 \Omega=\{ (p_{l_1}+p_{l_2}) d \ | \ l_1,l_2=1,...,N\},
 \label{wang2}
 \end{equation}
 where $p_{l_1}$ and $p_{l_2}$ represent the physical sensor positions given by the set $\mathbb{S}$.
 \end{definition}

\begin{definition}

(DCA \cite{Dias2017}): Given a physical sensor position set, $\mathbb{S} =\{p_{l_1}, p_{l_2},..., p_{l_N}\}\cdot d$, of a  SLA,
the following set determines the sensor positions of a DCA
\begin{equation}
\Gamma=\{ (p_{l_1}-p_{l_2}) d \ | \ l_1, l_2=1,2,...,N \},
\end{equation}
where $p_{l_1}$ and $p_{l_2}$ represent the physical sensor positions given by the set $\mathbb{S}$.
\end{definition}

In addition, to lighten the notations, given any two sets $\mathbb{S}$ and $\mathbb{S}'$ \cite{Xiao2023}, we use
\begin{equation}
\mathbb{C}(\mathbb{S},\mathbb{S}')=\{ p_i+p_j\ | \ p_i\in \mathbb{S},p_j\in \mathbb{S}' \},
\end{equation}
to denote the cross sum of elements from $\mathbb{S}$ and $\mathbb{S}'$.

\section{Second-Order Extended Co-Array}
The SOC of received NC signal vector $\boldsymbol{x}(t)$ in (\ref{w2}) can be represented as the following four cases

\begin{equation}
\begin{aligned}
\label{eq1}
&\mathcal{R}_{\boldsymbol{x}}^{(1)} \triangleq
E[\boldsymbol{x}(t)\boldsymbol{x}^T(t)]\\
&=\sum_{i=1}^D(\boldsymbol{a}(\theta_i) \boldsymbol{a}^T(\theta_i))E[s_i(t)s_i^T(t)]+ E[\boldsymbol{n}(t)\boldsymbol{n}^T(t)],\\
&\mathcal{R}_{\boldsymbol{x}}^{(2)}
\triangleq E[\boldsymbol{x}(t)\boldsymbol{x}^H(t)]\\
&= \sum_{i=1}^D(\boldsymbol{a}(\theta_i) \boldsymbol{a}^H(\theta_i))E[s_i(t)s_i^H(t)]+ E[\boldsymbol{n}(t)\boldsymbol{n}^H(t)],\\
&\mathcal{R}_{\boldsymbol{x}}^{(3)}
\triangleq E[\boldsymbol{x}^*(t)\boldsymbol{x}^T(t)]\\
&= \sum_{i=1}^D(\boldsymbol{a}^*(\theta_i) \boldsymbol{a}^T(\theta_i))E[s_i^*(t)s_i^T(t)]+ E[\boldsymbol{n}^*(t)\boldsymbol{n}^T(t)],\\
&\mathcal{R}_{\boldsymbol{x}}^{(4)} \triangleq
E[\boldsymbol{x}^*(t)\boldsymbol{x}^H(t)]\\
&= \sum_{i=1}^D(\boldsymbol{a}^*(\theta_i) \boldsymbol{a}^H(\theta_i))E[s_i^*(t)s_i^H(t)]+ E[\boldsymbol{n}^*(t)\boldsymbol{n}^H(t)].
\end{aligned}
\end{equation}


The vectorization of SOC $\mathcal{R}_{\boldsymbol{x}}^{(j)}$ generates the corresponding column vector $\boldsymbol{c}_{\boldsymbol{x}}^{(j)}\in\mathbb{C}^{N^2\times1}, j\in\{1,2,3,4\}$,
as follows

\begin{equation}
\label{wto18}
\begin{aligned}
\boldsymbol{c}_{\boldsymbol{x}}^{(j)}=vec(\mathcal{R}_{\boldsymbol{x}}^{(j)})=\sum_{i=1}^Db^{(j)}(\theta_i)p_{s_i}^{(j)}
=\boldsymbol{B}^{(j)}\boldsymbol{p}_{\boldsymbol{s}}^{(j)}+\boldsymbol{m}^{(j)},
\end{aligned}
\end{equation}
where $\boldsymbol{b}^{(j)}(\theta_i)\in \mathbb{C}^{N^2\times1}$, $\boldsymbol{B}^{(j)}\in\mathbb{C}^{N^2\times D}$,
$p_{s_i}^{(j)}\in\mathbb{C}$, $\boldsymbol{p}_{\boldsymbol{s}}^{(j)}\in \mathbb{C}^{D\times 1}$, $\boldsymbol{m}^{(j)}\in\mathbb{C}^{N^2\times 1}$.

Further, a new SO-ECA can be derived as follows by combining four $\boldsymbol{c}_{\boldsymbol{x}}^{(j)},j\in\{1,2,3,4\}$,
whose DOF are significantly enhanced compared to DCA

\begin{equation}
\label{w18}
\boldsymbol{c}_{\boldsymbol{x}}=[{\boldsymbol{c}_{\boldsymbol{x}}^{(1)}}^T,{\boldsymbol{c}_{\boldsymbol{x}}^{(2)}}^T,{\boldsymbol{c}_{\boldsymbol{x}}^{(3)}}^T,{\boldsymbol{c}_{\boldsymbol{x}}^{(4)}}^T]
\triangleq\boldsymbol{B}\boldsymbol{p}_s+\boldsymbol{m} \in \mathbb{C}^{4N^2\times1},
\end{equation}
where the equivalent source signal vector $\boldsymbol{p}_s$ is expressed as follows

\begin{equation}
\label{w22}
\boldsymbol{p}_s\triangleq[{\boldsymbol{p}_s^{(1)}}^T,{\boldsymbol{p}_s^{(2)}}^T,{\boldsymbol{p}_s^{(3)}}^T,{\boldsymbol{p}_s^{(4)}}^T]\in\mathbb{C}^{4D\times1},
\end{equation}
the equivalent noise vector $\boldsymbol{m}$ is
\begin{equation}
\boldsymbol{m}\triangleq[{\boldsymbol{m}^{(1)}}^T,{\boldsymbol{m}^{(2)}}^T,{\boldsymbol{m}^{(3)}}^T,{\boldsymbol{m}^{(4)}}^T] \in \mathbb{C}^{4N^2\times1},
\end{equation}
and the equivalent array manifold matrix $\boldsymbol{B}$ is
\begin{equation}
\label{w21}
\boldsymbol{B}= \left(
               \begin{matrix}
                 \boldsymbol{B}^{(1)} & 0                    & 0                    & 0\\
                 0                    & \boldsymbol{B}^{(2)} &0                     & 0\\
                 0                    & 0                    &\boldsymbol{B}^{(3)}  & 0\\
                 0                    & 0                    &0                     & \boldsymbol{B}^{(4)}\\
               \end{matrix}
             \right),
\end{equation}
where the specific expression of $\boldsymbol{B}^{(j)},\{j=1,2,3,4\}$ in (\ref{w21}) is as follows

\begin{equation}
\label{w6}
\begin{aligned}
&\boldsymbol{B}^{(j)}\triangleq[\boldsymbol{b}^{(j)}(\theta_1),\boldsymbol{b}^{(j)}(\theta_2),...,\boldsymbol{b}^{(j)}(\theta_D)],\\
&\boldsymbol{b}^{(j)}(\theta_i)\triangleq[{b}^{(j)}_1(\theta_i),{b}^{(j)}_2(\theta_i),...,{b}^{(j)}_{N^2}(\theta_i)]^T,(i\sx =\sx 1,2,...,D)\\
&\ \ \ \ \ \ =\begin{cases}
\boldsymbol{a}(\theta_i)\otimes\boldsymbol{a}(\theta_i), j=1,\\
\boldsymbol{a}(\theta_i)\otimes\boldsymbol{a}^*(\theta_i),j=2,\\
\boldsymbol{a}^*(\theta_i)\otimes\boldsymbol{a}(\theta_i),j=3,\\
\boldsymbol{a}^*(\theta_i)\otimes\boldsymbol{a}^*(\theta_i),j=4.
\end{cases}
\end{aligned}
\end{equation}

The SOCs for the four cases of the source are as follows
\begin{equation}
\label{w4}
\begin{aligned}
&\boldsymbol{p}_{\boldsymbol{s}}^{(j)}\triangleq[p_{s_1}^{(j)},p_{s_2}^{(j)},\ldots,p_{s_D}^{(j)}]^T,\\
&\begin{matrix}
p_{s_i}^{(j)}=\\
(i=1,2,\ldots,D)       \\
\end{matrix}
\begin{cases}
E[s_i(t)s_i(t)],j=1,\\
E[s_i(t)s_i^*(t)],j=2,\\
E[s_i^*(t)s_i(t)],j=3,\\
E[s_i^*(t)s_i^*(t)],j=4.
\end{cases}
\end{aligned}
\end{equation}

The SOCs for the four cases of the noise are as follows
\begin{equation}
\begin{aligned}
&\boldsymbol{m}\triangleq[\boldsymbol{0}, \sigma_n^2 \boldsymbol{I}_n, \sigma_n^2 \boldsymbol{I}_n,\boldsymbol{0}]^T\in\mathbb{C}^{4N^2\times1}.
\end{aligned}
\end{equation}

For $\boldsymbol{c}_{\boldsymbol{x}}^{(j)}=\boldsymbol{B}^{(j)}\boldsymbol{p}_{\boldsymbol{s}}^{(j)}+ {\boldsymbol{m}}^{(j)}, j\in\{1,2,3,4\}$ given in (\ref{wto18}),
it can be observed that $\boldsymbol{c}_{\boldsymbol{x}}^{(j)}$ is the result of vectorizing the SOC $\mathcal{R}_{\boldsymbol{x}}^{(j)}$,
where $\boldsymbol{p}_{\boldsymbol{s}}^{(j)}$ represents the equivalent source signals vector given in (\ref{w4}),
$\boldsymbol{B}^{(j)}$ represents the equivalent manifold matrix and $\boldsymbol{b}^{(j)}(\theta_i)$ represents
the equivalent steering vector given in (\ref{w6}) corresponding to the source signal.

Consequently, four virtual co-arrays under the four different cases can be obtained from
$\boldsymbol{c}_{\boldsymbol{x}}^{(j)}=\boldsymbol{B}^{(j)}\boldsymbol{p}_{\boldsymbol{s}}^{(j)}+ {\boldsymbol{m}}^{(j)}, j\in\{1,2,3,4\}$,
namely first second-order co-array (SOCA$_1$), second second-order co-array (SOCA$_2$), third second-order co-array (SOCA$_3$) and fourth second-order co-array (SOCA$_4$),
which are defined as follows.

\textbf{\textit{case 1:}} When $j=1$, we can get $\boldsymbol{b}^{(1)}(\theta_i)=\boldsymbol{a}(\theta_i)\otimes\boldsymbol{a}(\theta_i)$
from (\ref{w6}), and the elements of $\boldsymbol{b}^{(1)}(\theta_i)$ are given as follows
\begin{equation}
 \label{w11}
 \begin{aligned}
 &b^{(1)}_{N(l_1-1)+l_2}(\theta_i)
 =a_{l_1}(\theta_i)a_{l_2}(\theta_i)
 =e^{j\frac{2\pi d}{\lambda}(p_{l_1}+p_{l_2})\sin(\theta_i)}.
 \end{aligned}
 \end{equation}

Compared with the steering response $a_n(\theta_i)=e^{j\frac{2\pi p_{l_n} d}{\lambda}\sin(\theta_i)}$ for ULAs,
(\ref{w11}) implys the steering response of  sensor located at $(p_{l_1}+p_{l_2}) d$ for SOCA$_1$.
Consequently, the SOCA$_1$ derived from the vector $\boldsymbol{c}_{\boldsymbol{x}}^{(1)}$ in (\ref{w18})
can be considered as the virtual array, which is defined as follows.

\begin{definition}
(SOCA$_1$): For a linear array with N-sensors located at the given set $\mathbb{S}$, a multiset $\Phi_1$ is defined as follows
\begin{equation}
\label{w28}
\Phi_1\triangleq\{(p_{l_1}+p_{l_2}) d\ | \ l_1,l_2=1,2,\ldots,N\},
\end{equation}
where the multiset $\Phi_1$ allows repetitions, and has an underlying set $\Phi_1^{u}$ that contains the unique elements of $\Phi_1$.
Consequently, SOCA$_1$ is defined as the virtual linear array for case 1,
where the sensors are located at the positions given by the set $\Phi_1^u$.
\end{definition}

\textbf{\textit{case 2:}} When $j=2$, we can get $\boldsymbol{b}^{(2)}(\theta_i)=\boldsymbol{a}(\theta_i)\otimes\boldsymbol{a}^*(\theta_i)$
from (\ref{w6}), and the elements of $\boldsymbol{b}^{(2)}(\theta_i)$ are given as follows
\begin{equation}
 \label{wh8}
 \begin{aligned}
 &b^{(2)}_{N(l_1-1)+l_2}(\theta_i)
 =a_{l_1}(\theta_i)a_{l_2}^*(\theta_i)
 =e^{j\frac{2\pi d}{\lambda}(p_{l_1}-p_{l_2})\sin(\theta_i)}.
 \end{aligned}
 \end{equation}

Compared with the steering response $a_n(\theta_i)=e^{j\frac{2\pi p_{l_n} d}{\lambda}\sin(\theta_i)}$ for ULAs,
(\ref{wh8}) implys the steering response of  sensor located at $(p_{l_1}-p_{l_2}) d$ for SOCA$_2$.
Consequently, the SOCA$_2$ derived from the vector $\boldsymbol{c}_{\boldsymbol{x}}^{(2)}$ in (\ref{w18})
can be considered as the virtual array, which is defined as follows.

\begin{definition}
(SOCA$_2$): For a linear array with N-sensors located at the given set $\mathbb{S}$, a multiset $\Phi_2$ is defined as follows
\begin{equation}
\label{st12}
\Phi_2\triangleq\{(p_{l_1}-p_{l_2}) d\ | \ l_1,l_2=1,2,\ldots,N\},
\end{equation}
where the multiset $\Phi_2$ allows repetitions, and has an underlying set $\Phi_2^{u}$ that contains the unique elements of $\Phi_2$.
Consequently, SOCA$_2$ is defined as the virtual linear array for case 2,
where the sensors are located at the positions given by the set $\Phi_2^u$.
\end{definition}

\textbf{\textit{case 3:}} When $j=3$, we can get $\boldsymbol{b}^{(3)}(\theta_i)=\boldsymbol{a}^*(\theta_i)\otimes\boldsymbol{a}(\theta_i)$
from (\ref{w6}), and the elements of $\boldsymbol{b}^{(3)}(\theta_i)$ are given as follows
\begin{equation}
 \label{st11}
 \begin{aligned}
 &b^{(2)}_{N(l_1-1)+l_2}(\theta_i)
 =a_{l_1}^*(\theta_i)a_{l_2}(\theta_i)
 =e^{j\frac{2\pi d}{\lambda}(-p_{l_1}+p_{l_2})\sin(\theta_i)}.
 \end{aligned}
 \end{equation}

Compared with the steering response $a_n(\theta_i)=e^{j\frac{2\pi p_{l_n} d}{\lambda}\sin(\theta_i)}$ for ULAs,
(\ref{st11}) implys the steering response of  sensor located at $(-p_{l_1}+p_{l_2}) d$ for SOCA$_3$.
Consequently, the SOCA$_3$ derived from the vector $\boldsymbol{c}_{\boldsymbol{x}}^{(3)}$ in (\ref{w18})
can be considered as the virtual array, which is defined as follows.

\begin{definition}
(SOCA$_3$): For a linear array with N-sensors located at the given set $\mathbb{S}$, a multiset $\Phi_3$ is defined as follows
\begin{equation}
\label{w27}
\Phi_3\triangleq\{(-p_{l_1}+p_{l_2}) d\ | \ l_1,l_2=1,2,\ldots,N\},
\end{equation}
where the multiset $\Phi_3$ allows repetitions, and has an underlying set $\Phi_3^{u}$ that contains the unique elements of $\Phi_3$.
Consequently, SOCA$_3$ is defined as the virtual linear array for case 3,
where the sensors are located at the positions given by the set $\Phi_3^u$.
\end{definition}

\textbf{\textit{case 4:}} When $j=4$, we can get $\boldsymbol{b}^{(4)}(\theta_i)=\boldsymbol{a}^*(\theta_i)\otimes\boldsymbol{a}^*(\theta_i)$
from (\ref{w6}), and the elements of $\boldsymbol{b}^{(4)}(\theta_i)$ are given as follows
\begin{equation}
 \label{wh9}
 \begin{aligned}
 &b^{(4)}_{N(l_1-1)+l_2}(\theta_i)
 =a_{l_1}^*(\theta_i)a_{l_2}^*(\theta_i)
 =e^{j\frac{2\pi d}{\lambda}(-p_{l_1}-p_{l_2})\sin(\theta_i)}.
 \end{aligned}
 \end{equation}

Compared with the steering response $a_n(\theta_i)=e^{j\frac{2\pi p_{l_n} d}{\lambda}\sin(\theta_i)}$ for ULAs,
(\ref{wh9}) implys the steering response of  sensor located at $(-p_{l_1}-p_{l_2}) d$ for SOCA$_4$.
Consequently, the SOCA$_4$ derived from the vector $\boldsymbol{c}_{\boldsymbol{x}}^{(4)}$ in (\ref{w18})
can be considered as the virtual array, which is defined as follows.

\begin{definition}
(SOCA$_4$): For a linear array with N-sensors located at the given set $\mathbb{S}$, a multiset $\Phi_3$ is defined as follows
\begin{equation}
\label{w30}
\Phi_4\triangleq\{(-p_{l_1}-p_{l_2}) d\ | \ l_1,l_2=1,2,\ldots,N\},
\end{equation}
where the multiset $\Phi_4$ allows repetitions, and has an underlying set $\Phi_4^{u}$ that contains the unique elements of $\Phi_3$.
Consequently, SOCA$_4$ is defined as the virtual linear array for case 4,
where the sensors are located at the positions given by the set $\Phi_4^u$.
\end{definition}


Furthermore, $\boldsymbol{c}_{\boldsymbol{x}}\in\mathbb{C}^{4N^2\times1}$ in (\ref{w18}) is obtained by
combing four $\boldsymbol{c}_{\boldsymbol{x}}^{(j)}, j\in\{1,2,3,4\}$.
It is equivalent to the cumulants of the signals received in a single snapshot by the constructed virtual linear array,
which is the combination of all four possible co-arrays for $j\in \{1, 2, 3,4\}$, namely SOCA$_1$, SOCA$_2$, SOCA$_3$ and SOCA$_4$.
Therefore, the derived virtual linear array is the SO-ECA,
and to obtain the sensor position set of SO-ECA, we first introduce the following knowledge about multiset.

A multiset $\Phi$ is defined as the multiset-sum (bag sum) of the four multisets $\Phi= \Phi_1 + \Phi_2 + \Phi_3+ \Phi_4$,
which denotes the union operation of the set subjected to sets with duplicate elements, and the specific information is in \cite{Wang2024}.
On the contrary, for the multiset $\Phi$, there exists a set $\Phi^u=\Phi_1^u \cup \Phi_2^u \cup \Phi_3^u\cup \Phi_4^u$ that contains the unique elements.

After obtaining the set of SO-ECA physical sensors, the SO-ECA is defined as follows.
\begin{definition}
(SO-ECA):
For a linear array of N-sensors located at positions given by the set $\mathbb{S}$,
the SO-ECA is derived based on the SOCs, whose sensors are located at the set $\Phi^u$.
\end{definition}

Therefore, the SO-ECA is defined as the virtual linear array contained $|\Phi^u|=\mathcal{O}(4N^2)$ sensors.
In addition, it worths to notes  the following two remarks in order to fully understand SO-ECA.

\begin{remark}
$\Phi_2$ in (\ref{st12}) and $\Phi_3$ in (\ref{w27}) is symmetric corresponding to zero.
$\Phi_1$ in (\ref{w28}) and $\Phi_4$ in (\ref{w30}) are opposite numbers to each other.
Consequently, the sensor position set $\Phi^u$ of SO-ECA is the union of the four sets $\Phi_1^u$, $\Phi_2^u$, $\Phi_3^u$ and $\Phi_4^u$,
which is symmetric corresponding to zero.
It means that if a virtual sensor locate at $p \in \Phi^u\cdot d$, there must exist another corresponding virtual sensor located at $-p \in \Phi^u\cdot d$.
\end{remark}

\begin{remark}
In general, the SO-ECA of an arbitrary linear array might not be a hole-free array.
For example, the SO-ECA of a linear array with sensor positions given by $\mathbb{S} = \{0,1,5,8\}\cdot d$ can be obtained based on SOCs,
and the virtual sensor positions on the non-negative side are given by
$\{0,1,2,3,4,5,6,7,8,9,10,\text{X},\text{X},13,\text{X},\text{X},16 \}\cdot d$ without 11d, 12d, 14d, 15d.
\end{remark}

\section{Low Redundancy Sum and Difference Array Based on SO-ECA}
When a SO-ECA is hole-free, it can be easily utilized to estimate DOA without any spatial aliasing \cite{Pal2010}, \cite{Pal22011}, \cite{Piya2012}.
Moreover, the number of consecutive lags of DCA mainly
depends on the physical sensor geometry of a linear array \cite{Cohen2020}.
Therefore, the low redundancy sum and difference array (LR-SDA) is proposed based on SO-ECA in the paper to enhance the DOF and reduce the redundancy.
The proposed LR-SDA can be designed systematically by appropriately placing the three sub-arrays as shown in Fig. 2,
where sub-array 1 is a ULA with a big inter-spacing between sensors,
and sub-array 2 and 3 are ULA with unit inter-spacing between sensors.

\subsection{Structure of the Proposed LR-SDA}
\begin{figure}[h]
 \center{\includegraphics[width=8cm]  {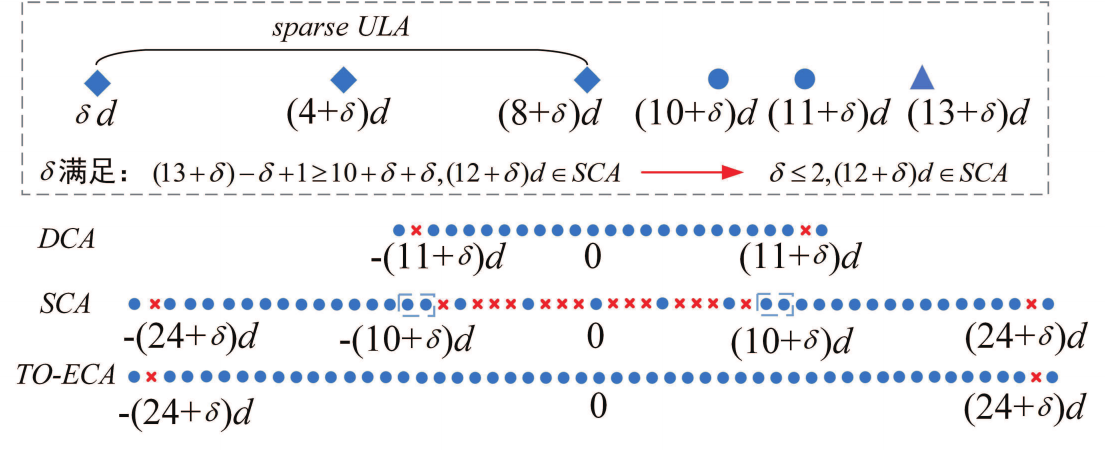}}
 \caption{Co-arrays with 6 sensors LR-SDA.}
\end{figure}

According to the CNA, to obtain consecutive sum co-array,
the right part of the array consists of sensors with an unit inter-spacing.
Furthermore, in SO-ECA, virtual sensor positions can be obtained through difference or sum of any two physical sensors.
Therefore, the first sensor position of the array does not necessarily start from zero,
allowing the array to be shifted to the right by $\delta$.
Additionally, to reduce the overlapping part between DCA and SCA in SO-ECA,
the right part of the array with an unit inter-spacing can be split into two sections, thereby increasing the inter-spacing between sensors.
The specific example for $N=6$ physical sensors is shown in Fig. 1,
where the DCA with consecutive range $\{ -11-\delta:11+\delta\} \cdot d$,
and the SCA with positive consecutive range $\{ 10+\delta:24+\delta\} \cdot d$,
further resulting sum-difference co-array with consecutive range $\{ -24-\delta:24+\delta\} \cdot d$,
which  possesses more virtual sensors than $\{-22:22\}\cdot d$ of TNA \cite{WangY2020} and can also be directly utilized to perform DOA estimation.
Next, we propose the detail definition of LR-SDA as follows.

\begin{definition}
(LR-SDA) The LR-SDA consists of three sub-arrays with the number of physical sensors $N=N_1+(N_2-\eta)+\eta$,
where $N_1$, $N_2-\eta$ and $\eta$ represent the number of physical sensors in each sub-array.
These sensors in LR-SDA are located at positions given by the set $\mathbb{S}_1$, $\mathbb{S}_2$ and $\mathbb{S}_3$,
which can be represented as follows, and the structure of the LR-SDA is shown as Fig. 2.
\begin{equation}\nonumber
\label{st1}
\begin{aligned}
&\mathbb{S}=\mathbb{S}_1\cup\mathbb{S}_2\cup\mathbb{S}_3,\\
&\mathbb{S}_1=\{ \delta:N_2+1:(N_1-1)(N_2+1)+\delta \}\cdot d,\\
&\mathbb{S}_2=\{ (N_1\sx -\sx 1)(N_2\sx +\sx 1)\sx +\sx \eta \sx +\sx 1\sx +\sx \delta:(N_1\sx -\sx 1)(N_2\sx +\sx 1)\sx +\sx N_2\sx +\sx \delta \}\sx \cdot\sx d,\\
&\mathbb{S}_3=\{ N_1(N_2+1)+1+\delta:N_1(N_2+1)+\eta+\delta \}\cdot d,\\
&\eta = \lceil \frac{N_2}{2} \rceil - 1,\\
&\delta = \lfloor \frac{N_2 + 1}{2} \sx \rfloor \ and \  (N_1\sx -\sx 1)(N_2\sx +\sx 1)\sx +\sx N_2+ \delta\sx +\sx 1 \sx \in \sx \mathbb{C}(\mathbb{S}_1,\mathbb{S}_1).
\end{aligned}
\end{equation}
\end{definition}

\begin{figure*}
 \center{\includegraphics[width=15cm]  {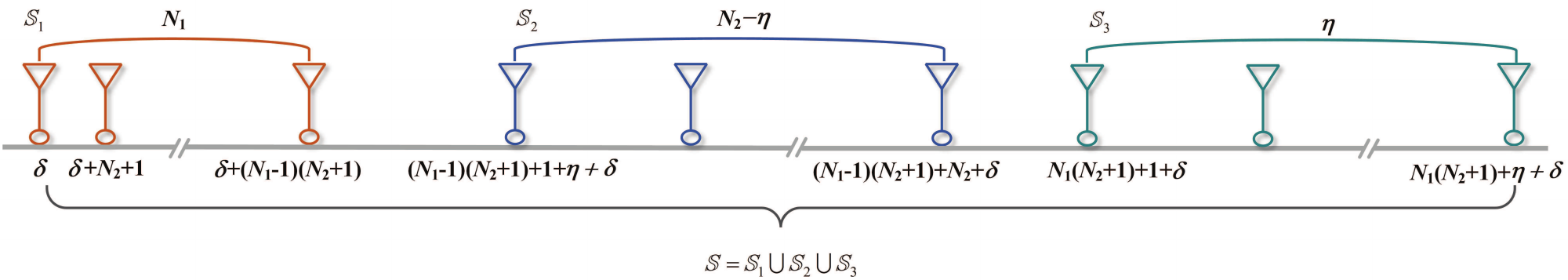}}
 \caption{\label{1} Structure of LR-SDA}
\end{figure*}

\begin{theorem}
The SO-ECA of LR-SDA is hole-free.
\end{theorem}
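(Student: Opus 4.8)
The plan is to reduce hole-freeness of the full SO-ECA to two statements, one about its difference part and one about its sum part, and to prove those by a block decomposition of $\mathbb{S}=\mathbb{S}_1\cup\mathbb{S}_2\cup\mathbb{S}_3$ (I normalize all position sets by the unit spacing $d$). By Remark~1, $\Phi^u=\Phi_1^u\cup\Phi_2^u\cup\Phi_3^u\cup\Phi_4^u$ is symmetric about the origin, with $\Phi_2^u=\Phi_3^u$ equal to the difference co-array $\Gamma$ of $\mathbb{S}$ (itself symmetric) and $\Phi_1^u=-\Phi_4^u$ equal to the sum co-array $\Omega$; moreover the largest element of $\Phi^u$ is $2p_{\max}$, attained in $\Omega$, since the aperture $p_{\max}-p_{\min}$ of $\Gamma$ and all elements of $-\Omega$ are smaller. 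It therefore suffices to prove (i) $\Gamma\supseteq\{-U:U\}$ for a suitable $U$, and (ii) $\Gamma\cup\Omega\supseteq\{0:2p_{\max}\}$; for then $\Gamma\cup\Omega\cup(-\Omega)=\{-2p_{\max}:2p_{\max}\}$ by symmetry, which --- nothing in $\Phi^u$ being larger in magnitude --- equals $\Phi^u$ and is hole-free.

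For (i), put $q:=N_2+1$, so $\mathbb{S}_1$ is a $q$-spaced ULA with $N_1$ sensors and $\eta=\lceil N_2/2\rceil-1$ makes $\{1,\dots,\eta\}$ and $\{\eta+1,\dots,N_2\}$ partition $\{1,\dots,N_2\}$. Unwinding the nine pairwise difference sets $\mathbb{S}_i-\mathbb{S}_j$, one finds that $\mathbb{S}_1-\mathbb{S}_1$ supplies the grid lags $\{mq:|m|\le N_1-1\}$, that $\mathbb{S}_2-\mathbb{S}_1=\bigcup_{m=0}^{N_1-1}(mq+\{\eta+1,\dots,N_2\})$ and $\mathbb{S}_3-\mathbb{S}_1=\bigcup_{m=1}^{N_1}(mq+\{1,\dots,\eta\})$, and that the lags $\{0,\dots,N_2\}$ near zero are produced by $\mathbb{S}_2-\mathbb{S}_2$, $\mathbb{S}_3-\mathbb{S}_3$ and $\mathbb{S}_3-\mathbb{S}_2$ (using $N_2-\eta=\lfloor N_2/2\rfloor+1\ge 2$). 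Superimposing the grid lags on the offset blocks $mq+\{0,\dots,N_2\}$ makes those blocks abut, so $\{0:N_1q-1\}\subseteq\Gamma$, while the $m=N_1$ term of $\mathbb{S}_3-\mathbb{S}_1$ adds $\{N_1q+1:N_1q+\eta\}$. Symmetrising gives (i) with $U=N_1q-1$, together with the extra lags $\pm\{N_1q+1,\dots,N_1q+\eta\}$ in $\Gamma$; for $\eta\ge1$ the single lag $N_1q$ (the aperture being $N_1q+\eta$) is not in $\Gamma$ and must come from $\Omega$.

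For (ii), the parallel computation on the nine sum sets $\mathbb{S}_i+\mathbb{S}_j$ shows that $\mathbb{S}_1+\mathbb{S}_1$ gives the grid sums $2\delta+mq$, $0\le m\le 2N_1-2$; $\mathbb{S}_1+\mathbb{S}_2$ and $\mathbb{S}_1+\mathbb{S}_3$ fill the offsets $\{\eta+1,\dots,N_2\}$ and $\{1,\dots,\eta\}$ above the upper grid sums; and $\mathbb{S}_2+\mathbb{S}_2$, $\mathbb{S}_2+\mathbb{S}_3$, $\mathbb{S}_3+\mathbb{S}_3$ carry the run up to $2\delta+2N_1q+2\eta=2p_{\max}$. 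What is needed is that $\Omega$ contains $N_1q$ itself and the consecutive run $\{N_1q+\eta+1:2p_{\max}\}$, whose lower end $N_1q+\eta+1$ equals $(N_1-1)(N_2+1)+N_2+\delta+1$, one past the last extra lag of $\Gamma$. This is the junction that the prescribed $\delta=\lfloor(N_2+1)/2\rfloor$ and the side-condition $(N_1-1)(N_2+1)+N_2+\delta+1\in\mathbb{C}(\mathbb{S}_1,\mathbb{S}_1)$ in the definition of the LR-SDA are there to control: together they force $\Omega$ to reach down to (hence overlap) the range of $\Gamma$, in particular to produce $N_1q$, and to begin its consecutive run no later than $N_1q+\eta+1$. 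Putting it together, $\{0:N_1q-1\}\cup\{N_1q\}\cup\{N_1q+1:N_1q+\eta\}\cup\{N_1q+\eta+1:2p_{\max}\}=\{0:2p_{\max}\}$, which is (ii); the reduction then closes the proof.

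I expect the main obstacle to be exactly this junction bookkeeping. On the difference side, one must verify that the two dense sub-arrays contribute precisely the offsets $\{1,\dots,\eta\}$ and $\{\eta+1,\dots,N_2\}$ that plug the gaps of the sparse grid of $\mathbb{S}_1$ --- which is what the value $\eta=\lceil N_2/2\rceil-1$ is designed for --- and on the sum side, one must confirm, from the exact $\delta$ and the side-condition, that $N_1q$ does lie in $\Omega$ and that $\Omega$ is consecutive from $N_1q+\eta+1$ up to $2p_{\max}$ with nothing skipped. Small $N_1$ or $N_2$, the degenerate case $\eta=0$ (so $\mathbb{S}_3=\emptyset$), and the parity of $N_2+1$ will each need to be checked separately.
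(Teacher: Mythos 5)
Your overall strategy coincides with the paper's: both arguments reduce to the non-negative half via the symmetry of Remark~1 and then enumerate the six pairwise sum sets $\mathbb{C}(\mathbb{S}_i,\mathbb{S}_j)$ and six pairwise difference sets $\mathbb{C}(\mathbb{S}_i,-\mathbb{S}_j)$ of the three sub-arrays and take their union. Your treatment of the difference half is complete and matches the paper's computation: writing $q=N_2+1$, the blocks $mq+\{\eta+1,\dots,N_2\}$ from $\mathbb{S}_2-\mathbb{S}_1$ and $mq+\{1,\dots,\eta\}$ from $\mathbb{S}_3-\mathbb{S}_1$ do tile the gaps of the grid $\{mq\}$, giving $\{0:N_1q-1\}\cup\{N_1q+1:N_1q+\eta\}\subseteq\Gamma$ with the single lag $N_1q$ absent.

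The sum half, however --- which is where the theorem actually lives --- is asserted rather than proven. You write that $\delta=\lfloor(N_2+1)/2\rfloor$ together with the side-condition in Definition~8 ``force $\Omega$ to produce $N_1q$ and to begin its consecutive run no later than $N_1q+\eta+1$,'' but you never unwind either condition, and this step is not routine. For example, $N_1q\in\mathbb{C}(\mathbb{S}_1,\mathbb{S}_1)=\{2\delta+mq : 0\le m\le 2N_1-2\}$ requires $q\mid 2\delta$; with $\delta=\lfloor(N_2+1)/2\rfloor$ this holds exactly when $N_2$ is odd (then $2\delta=q$), whereas for even $N_2$ one gets $2\delta=N_2\not\equiv 0\ (\mathrm{mod}\ q)$ and a different witness for $N_1q$ in $\Omega$ must be exhibited --- for even $N_2\ge 4$ none of the remaining sum sets reaches down that far, so the parity split is not cosmetic and the junction lag can genuinely fail to be covered. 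Likewise, the claim that $\Omega$ is consecutive from $N_1q+\eta+1$ up to $2p_{\max}$ needs the same block-interlocking bookkeeping you carried out on the difference side, now shifted by $2\delta$, plus a separate check at the top end where only $\mathbb{S}_2+\mathbb{S}_3$ and $\mathbb{S}_3+\mathbb{S}_3$ contribute. Until these junctions are verified explicitly (including the degenerate cases $\eta=0$ and $\eta=1$ that you yourself flag), the argument is a plan rather than a proof: the one step you defer is precisely the step on which the theorem turns.
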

\begin{proof}
Since the consecutive lags of SO-ECA for LR-SDA are symmetry corresponding to zero,
only the positive lags are considered as follows.
Firstly, we can get the cross sum and self sum of $\mathbb{S}_1$, $\mathbb{S}_2$ and $\mathbb{S}_3$  as follows
\begin{equation}\nonumber
\begin{aligned}
&\mathbb{C}(\mathbb{S}_1,\mathbb{S}_1)=\{ 2\delta:(N_1+1):2(N_1-1)(N_2+1)+2\delta \}, \\
&\mathbb{C}(\mathbb{S}_1,\mathbb{S}_2)\sx =\sx \{ 2\delta\sx +\sx (N_1\sx -\sx 1)(N_2\sx +\sx 1)\sx +\eta\sx +\sx 1:N_1(N_2\sx +\sx 1)\sx +\sx 2\delta\sx -\sx 1\}\\
&\cup\{ N_1(N_2+1)+2\delta+\eta+1:N_1(N_2+1)+2\delta\}\cup ... \\
&\cup\{2N_1(N_2\sx +\sx 1)\sx -\sx 2N_2\sx -\sx 1\sx +\sx \eta\sx +\sx 2\delta :2N_1(N_2\sx +\sx 1)-N_2\sx -\sx 2\sx +\sx 2\delta\}, \\
&\mathbb{C}(\mathbb{S}_1,\mathbb{S}_3)=\{N_1(N_2\sx +\sx 1)\sx +\sx 1\sx +\sx 2\delta:N_1(N_2\sx +\sx 1)\sx +\sx \eta\sx +\sx 2\delta \}\cup \\
&\{ N_1(N_2\sx +\sx 1)\sx +\sx 2\sx +\sx N_1\sx +\sx 2\delta:N_1(N_2\sx +\sx 1)\sx +\sx \eta\sx +\sx 1\sx +\sx 2\delta \}\cup ... \cup \\
&\{ 2N_1(N_2\sx +\sx 1)\sx +\sx 2\delta\sx +\sx 2\eta\sx -\sx N_2\sx :\sx 2N_1(N_2\sx +\sx 1)\sx +\sx \eta\sx +\sx 2\delta\sx -\sx N_2\sx -\sx 1 \},\\
&\mathbb{C}(\mathbb{S}_2,\mathbb{S}_2)\sx =\sx \{ 2N_1(N_2\sx +\sx 1)\sx +\sx 2\eta\sx +\sx 2\delta\sx -\sx 2N_2\sx :\sx 2N_1(N_2\sx +\sx 1)\sx +\sx 2\delta\sx -\sx 2 \},\\
&\mathbb{C}(\mathbb{S}_2,\mathbb{S}_3)\sx =\sx \{ 2N_1(N_2\sx +\sx 1)\sx +\sx \eta\sx +\sx 1\sx +\sx 2\delta\sx -\sx N_2\sx :\sx 2N_1(N_2\sx +\sx 1)\sx +\sx \eta\sx -\sx 1\sx +\sx 2\delta \},\\
&\mathbb{C}(\mathbb{S}_3,\mathbb{S}_3)\sx =\sx \{2N_1(N_2\sx +\sx 1)+2+2\delta\sx :\sx 2N_1(N_2+1)+2\eta+2\delta\}.
\end{aligned}
\end{equation}

Secondly, we can get the cross difference and self difference of $\mathbb{S}_1$, $\mathbb{S}_2$ and $\mathbb{S}_3$  as follows
\begin{equation}\nonumber
\begin{aligned}
&\mathbb{C}(\mathbb{S}_1,-\mathbb{S}_1)=\{ 0:N_1+1:(N_1-1)(N_2+1)\},\\
&\mathbb{C}(-\mathbb{S}_1,\mathbb{S}_2)=\{ \eta\sx +\sx 1:N_2\} \cup \{ \eta\sx +\sx N_1\sx +\sx 2 :N_2\sx +\sx N_1\sx +\sx 1 \}\cup...\\
&\cup\{(N_1-1)(N_2+1)+\eta+1: N_1(N_2+1)-1 \},\\
&\mathbb{C}(-\mathbb{S}_1,\mathbb{S}_3)=\{N_2\sx +\sx 2\sx :\sx N_2\sx +\sx 1\sx +\sx \eta \}\sx \cup \sx \{N_2\sx +\sx N_1\sx +\sx 3\sx :\sx N_2\sx +\sx N_1\sx +\sx \eta\sx +\sx 2\}\\
&\cup...\cup\{ N_1(N_2+1)+1:N_1(N_2+1)\sx +\sx \eta \},\\
&\mathbb{C}(\mathbb{S}_2,-\mathbb{S}_2)=\{ \eta+1-N_2:N_2-\eta-1 \},\\
&\mathbb{C}(-\mathbb{S}_2,\mathbb{S}_3)=\{ 2:N_2\},\\
&\mathbb{C}(\mathbb{S}_3,-\mathbb{S}_3)=\{1-\eta:\eta-1\}.
\end{aligned}
\end{equation}

Thus, the union of all sets above is
\begin{equation}
\begin{aligned}
&\mathbb{C}(\mathbb{S}_1,\mathbb{S}_1)\cup\mathbb{C}(\mathbb{S}_1,\mathbb{S}_2)\cup\mathbb{C}(\mathbb{S}_1,\mathbb{S}_3)
\cup\mathbb{C}(\mathbb{S}_2,\mathbb{S}_2)\cup\mathbb{C}(\mathbb{S}_2,\mathbb{S}_3)\\
&\cup\mathbb{C}(\mathbb{S}_2,\mathbb{S}_3)\cup\mathbb{C}(\mathbb{S}_3,\mathbb{S}_3)\cup\mathbb{C}(\mathbb{S}_1,-\mathbb{S}_1)
\cup\mathbb{C}(-\mathbb{S}_1,\mathbb{S}_2)\\
&\cup\mathbb{C}(-\mathbb{S}_1,\mathbb{S}_3)\cup\mathbb{C}(\mathbb{S}_2,-\mathbb{S}_2)\cup\mathbb{C}(-\mathbb{S}_2,\mathbb{S}_3)
\cup\mathbb{C}(\mathbb{S}_3,-\mathbb{S}_3)\\
&=\{ 0:2N_1(N_2+1)+2\eta+2\delta \}.
\end{aligned}
\end{equation}

To sum up, the SO-ECA of proposed LR-SDA is hole-free,
and the DOF of LR-SDA is $4N_1(N_2+1)+4\eta+4\delta+1$.

\end{proof}

\subsection{The Maximum DOF of LR-SDA With the Given Number of Physical Sensors}
The DOF of LR-SDA designed based on SO-ECA can be further increased by optimizing the distribution of the physical sensors
among three sub-arrays under a given number of physical sensors.

\begin{theorem}
To obtain the maximum DOF of LR-SDA with the given number of physical sensors,
the number of sensors in the three sub-arrays is set to
\begin{equation}
\begin{cases}
N_1=N-N_2,\\
N_2=\lceil\frac{N-1}{2}\rfloor-\eta,\\
N_3=\eta,
\end{cases} \sx (\eta=1)\ or \
\begin{cases}
N_1=N-N_2,\\
N_2=\lceil\frac{2N-1}{4}\rfloor-\eta,\\
N_3=\eta,
\end{cases} \sx (\eta\geq2).
\end{equation}
\end{theorem}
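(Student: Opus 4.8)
\emph{Proof idea.}
The plan is to reduce the claim to a one–parameter integer optimization and then solve it by concavity together with a short boundary/parity analysis. First I would invoke Theorem~1: every admissible LR-SDA has DOF $4N_1(N_2+1)+4\eta+4\delta+1$ and exactly $N_1+(N_2-\eta)+\eta=N_1+N_2$ physical sensors. Hence, with $N$ fixed, the only genuine degree of freedom is the integer $N_2$: we must take $N_1=N-N_2$, and the LR-SDA definition forces $\eta=\lceil N_2/2\rceil-1$ and $\delta=\lfloor(N_2+1)/2\rfloor$. So maximizing the DOF over all $N$-sensor LR-SDAs is the same as maximizing
\begin{equation}\nonumber
F(N_2)=4(N-N_2)(N_2+1)+4\eta(N_2)+4\delta(N_2)+1
\end{equation}
over the integers $N_2$ for which the three sub-arrays are non-degenerate and the side condition $(N_1-1)(N_2+1)+N_2+\delta+1\in\mathbb{C}(\mathbb{S}_1,\mathbb{S}_1)$ from the definition holds.

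Next I would put $F$ in a transparent form. Because $\lceil N_2/2\rceil=\lfloor(N_2+1)/2\rfloor$, one has $\eta(N_2)+\delta(N_2)=2\lceil N_2/2\rceil-1$, equal to $N_2-1$ for even $N_2$ and to $N_2$ for odd $N_2$; substituting gives
\begin{equation}\nonumber
F(N_2)=-4N_2^{2}+4NN_2+4N+\begin{cases}-3,&N_2\ \text{even},\\1,&N_2\ \text{odd}.\end{cases}
\end{equation}
Thus $F$ is a strictly concave quadratic in $N_2$ — with base parabola $q(N_2)=-4N_2^{2}+4NN_2+4N$ symmetric about $N_2=N/2$ and satisfying $q(N/2)-q(N/2\pm k)=4k^{2}$ — perturbed by a parity term of amplitude at most $4$. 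Consequently the integer maximizer of $F$ must lie within a bounded distance of $N/2$: since the parity bonus favouring odd $N_2$ equals exactly $4$, it can only shift the optimum between $N/2$ and $N/2\pm1$ (for even $N$) or among the two or three integers closest to $N/2$ (for odd $N$). I would then split into the residue classes of $N$ modulo $4$ (and modulo $2$), and in each class evaluate $F$ at those few candidates to read off the winner, using the side condition and the non-degeneracy requirements to break ties; finally I would rewrite the winning $N_2$, together with $N_1=N-N_2$ and $N_3=\eta$, into the stated closed forms by unwinding the nested floor/ceiling arithmetic and the identity $\eta=\lceil N_2/2\rceil-1$.

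The dichotomy ``$\eta=1$'' versus ``$\eta\ge2$'' in the statement reflects the split between small $N$ and larger $N$. For small $N$, the feasibility constraints on the sub-arrays — non-degeneracy of sub-arrays~2 and~3 and the side condition $(N_1-1)(N_2+1)+N_2+\delta+1\in\mathbb{C}(\mathbb{S}_1,\mathbb{S}_1)$ — together with the exact value of the parity correction pin the maximizer to a configuration with $\eta=1$ and $N_2=\lceil\frac{N-1}{2}\rfloor-\eta$; for larger $N$ the generic interior maximizer near $N/2$ is admissible, has $\eta\ge2$, and gives $N_2=\lceil\frac{2N-1}{4}\rfloor-\eta$. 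I expect the main obstacle to be exactly this bookkeeping: (i) pinning down the threshold in $N$ at which the feasibility constraints stop biting, hence where the two formulas change over; and (ii) tracking the nested floor/ceiling functions through every residue class of $N$ so that the optimal $N_2$ matches the claimed expression in each case. Each individual comparison is elementary — the value of one concave quadratic at a handful of consecutive integers — so the difficulty is organizational rather than conceptual.
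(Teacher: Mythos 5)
Your overall strategy is the same as the paper's: use $N=N_1+N_2$ to eliminate $N_1$, regard the DOF as a concave quadratic in the single integer $N_2$, and take the integer nearest the vertex. The difference is in how the $N_2$-dependence of $\eta$ and $\delta$ is handled. The paper relaxes $N_1,N_2$ to the reals, keeps $\delta$ (and, in the $\eta=1$ case, all non-quadratic terms) frozen as constants, reads off the vertex at $(N-1)/2$ (resp.\ $(2N-1)/4$ after a partial substitution $4\eta\approx 2N_2-4$), and rounds; there is no residue-class analysis at all. You instead substitute $\eta(N_2)=\lceil N_2/2\rceil-1$ and $\delta(N_2)=\lfloor (N_2+1)/2\rfloor$ in full, which is the more honest computation, and obtain a parity-perturbed parabola centred at $N_2=N/2$ rather than at the paper's vertices.

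The gap is that the decisive step --- the residue-class evaluation that is supposed to recover the two closed forms in the statement --- is only announced, and it is exactly where your derivation and the paper's part ways. Your vertex $N/2$ and the paper's $(2N-1)/4$ round to different integers for many $N$, and your $+4$ parity bonus for odd $N_2$ pushes the maximizer further still. Concretely, for $N=9$ your $F$ is maximized at $N_2=5$ (with $\eta=2$, $\delta=3$, $F=117$), whereas the theorem and the paper's Table~II give $(N_1,N_2)=(5,4)$ with DOF $109$; reconciling these requires invoking the feasibility side condition $(N_1-1)(N_2+1)+N_2+\delta+1\in\mathbb{C}(\mathbb{S}_1,\mathbb{S}_1)$ and noting that the paper's own $\eta=1$ DOF formula, $4N_1N_2+4N_1+4\delta+1$, omits the $4\eta$ term present in the Theorem~1 expression you start from. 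Until the case analysis is actually carried through with these constraints and with the $\eta=1$ versus $\eta\ge 2$ formula change made explicit, you have not shown that the optimum lands on $\lceil\frac{N-1}{2}\rfloor-\eta$ or $\lceil\frac{2N-1}{4}\rfloor-\eta$, so the proposal is a plausible plan rather than a proof --- and the plan, followed literally, appears to terminate at a different $N_2$ than the statement claims.
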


\begin{proof}
Firstly, it follows from Definition 8 that the DOF of the LR-SDA with $\eta=1$ are
\begin{equation}
\label{st2}
\begin{aligned}
\text{DOF}&=4N_1N_2+4N_1+4\delta+1.
\end{aligned}
\end{equation}

Furthermore, the total number of sensors is
\begin{equation}
\label{st3}
N=N_1+N_2.
\end{equation}

Eq. (\ref{st2}) and (\ref{st3}) can be used to formulate an optimization problem for finding parameters $N_1$ and
$N_2$ that maximize the DOF of LR-SDA with a given number of sensors $N$
\begin{equation}\nonumber
\begin{aligned}
&\underset{N_1,N_2\in \mathbb{N}_+}{maximize}\ \ 4N_1N_2+4N_1+4\delta+1,\\
&subject\ to\ N_1+N_2=N. \ \ \ \ \ \ \ \ \ \ \ \ \ \ \ \ \ \ \ \ \ \ \ \ \ \ \ \ \ (P1)
\end{aligned}
\end{equation}

\subsubsection{Solution to relaxed problem}
~\par
Although (P1) is an problem with no general closed-form integer solution,
the maximum possible DOF maybe found under the relaxation that $N_1, N_2\in \mathbb{R}_+$.
Solving (\ref{st3}) for $N_2$ and substituting the result into (\ref{st2}) yields

\begin{equation}\nonumber
\text{DOF}=-4N_2^2+(4N-4)N_2+4\eta+4\delta+1.
\end{equation}

Since DOF is a concave function about $N_2$, we can straightly solve the maximum DOF \cite{Hiriart1989}.
The optimal parameter pair of the relaxed problem is
\begin{equation}\nonumber
N_2^*=\frac{N-1}{2}.
\end{equation}

In addition, since $N_2\in \mathbb{Z}$, we obtain $N_2$ as
\begin{equation}\nonumber
N_2^*=\lceil \frac{N-1}{2} \rfloor.
\end{equation}

Secondly, it follows from Definition 8 that the DOF of the LR-SDA with $\eta\geq2$ are
\begin{equation}
\label{st4}
\begin{aligned}
\text{DOF}&=4N_1N_2+4N_1+4\eta+4\delta+1.
\end{aligned}
\end{equation}

Furthermore, the total number of sensors is
\begin{equation}
\label{st5}
N=N_1+N_2.
\end{equation}

Eq. (\ref{st5}) and (\ref{st4}) can be used to formulate an optimization problem for finding parameters $N_1$,
$N_2$ that maximize the DOF of LR-SDA with a given number of sensors $N$
\begin{equation}\nonumber
\begin{aligned}
&\underset{N_1,N_2\in \mathbb{N}_+}{maximize}\ \ 4N_1N_2+4N_1+4\eta+4\delta+1,\\
&subject\ to\ N_1+N_2=N. \ \ \ \ \ \ \ \ \ \ \ \ \ \ \ \ \ \ \ \ \ \ \ \ \ \ \ \ \ (P2)
\end{aligned}
\end{equation}

Although (P2) is an problem with no general closed-form integer solution,
the maximum possible DOF maybe found under the relaxation that $N_1, N_2\in \mathbb{R}_+$.
Solving (\ref{st5}) for $N_2$ and substituting the result into (\ref{st4}) yields

\begin{equation}\nonumber
\text{DOF}=-4N_2^2+(4N-2)N_2+4N+4\delta+1.
\end{equation}

Since DOF is a concave function about $N_2$, we can straightly solve the maximum DOF \cite{Hiriart1989}.
The optimal parameter pair of the relaxed problem is
\begin{equation}\nonumber
N_2^*=\frac{2N-1}{4}.
\end{equation}

In addition, since $N_2\in \mathbb{Z}$, we obtain $N_2$ as
\begin{equation}\nonumber
N_2^*=\lceil \frac{2N-1}{4} \rfloor.
\end{equation}
\end{proof}

\subsubsection{General Solution}
~\par

When a solution for a specific maximum positive consecutive lags $E$ is required,
the parameters $N_1$ and $N_2$ of LR-SDA can be found by minimizing the number of sensors for a target $E$ instead
\begin{equation}\nonumber
\begin{aligned}
&\underset{N_1,N_2\in \mathbb{N}_+}{minimize}\ \ N=N_1+N_2,\\
&subject\ to\
\begin{cases}
2N_1N_2+2N_1+2\delta=E, (\eta=1),\\
2N_1N_2+2N_1+2\eta+2\delta=E, (\eta \geq 2).\ (P3)
\end{cases}
\end{aligned}
\end{equation}

Relaxing the integer constraint to $N_1,N_2 \in\mathbb{R}_+$ and solving (P3) for $N_1$ yields
\begin{equation}
\label{wh7}
\begin{aligned}
&N_1^*=\frac{\sqrt{2}\sqrt{E-2\delta}}{2},\ (\eta=1),\\
&N_1^*=\frac{-1+\sqrt{2}\sqrt{E+3-2\delta}}{2},\ (\eta\geq 2).
\end{aligned}
\end{equation}

\textbf{Note:} The specific derivation process of $N_1^*$ in (\ref{wh7}) is shown in Appendix A.

\section{Necessary and Sufficient Conditions for Signal Reconstruction}

In array signal processing, the problem of signal reconstruction can be fundamentally reformulated as the estimation of the DOA.
Specifically, for the received signal $\mathbf{x}(t)$ with an $N$-sensors array denoted as (\ref{w2}),
and $\mathbf{s}(t) \in \mathbb{C}^D$ is the source signals vector from $D$ narrowband far-field sources,
$\boldsymbol{\theta}_{[1:D]} = [\theta_1, \dots, \theta_D]^T $ denoting the unknown DOAs,
and $\mathbf{A}(\boldsymbol{\theta}) = [\mathbf{a}(\theta_1), \dots, \mathbf{a}(\theta_D)] \in \mathbb{C}^{N \times D} $ is the array manifold matrix,
composed of steering vectors determined by the array geometry and $\boldsymbol{\theta}_{[1:D]}$.
Once $\boldsymbol{\theta}_{[1:D]}$ are accurately estimated
and the array manifold matrix $\mathbf{A}(\boldsymbol{\theta})$ satisfies the rank condition,
the source signals vector $\mathbf{s}(t)$ can be reconstructed from the received signal $\mathbf{x}(t)$.
At this point, whether the signal can be reconstructed is equivalent to whether there is a one-to-one correspondence between $\mathbf{x}(t) $ and $\boldsymbol{\theta}_{[1:D]}$.
This requires several conditions are satisfied as follows \cite{Tuncer2009}, \cite{ThompsonAR2017}, \cite{Krim1996}.

Firstly, the signal characteristics must be taken into account.
The signal should satisfy the narrowband assumption, i.e., $B \ll f_c $,
where $B$ is the signal bandwidth and $f_c$ is the carrier frequency.
Secondly, the relationship between the array dimension and the number of source signals is crucial.
The number of sensors $N$ should be greater than or equal to the number of source signals $D$, i.e., $N \geq D$.
Moreover, the array manifold matrix $\mathbf{A}$ should have full column rank, i.e., $\text{rank}(\mathbf{A}) = D $ \cite{Horn2012}.
Under above conditions, the linear system (\ref{w2}) ensures a unique solution, establishing the one-to-one correspondence between $\mathbf{x}(t)$  and $\mathbf{\theta}_{[1:D]}$.
Finally, the array geometry plays a vital role.
When the inter-spacing $d$ exceeds half the wavelength ($\lambda/2$) in any array,
the problem of ambiguous angles may arise,
causing the array steering matrix $\mathbf{A}$ to lose full rank.
This leads to a situation where the one-to-one correspondence between $\mathbf{x}(t)$ and $\boldsymbol{\theta}_{[1:D]}$ is no longer satisfied,
making it impossible to uniquely reconstruct the source signals.
Therefore, the geometry and the inter-spacing $d$ should satisfy a certain threshold to prevent the occurrence of grid flap problems,
ensuring that DOA estimation remains unambiguous.

To sum up, there exists necessary and sufficient conditions for signal reconstruction as follows.

\begin{theorem}
For any array $\{ p_1, p_2, \dots, p_N \} \cdot d$, the sufficient and necessary condition for signal reconstruction is
\begin{equation}
\label{st14}
\operatorname{lcm}\left( \frac{\lambda}{p_{l_1}}, \frac{\lambda}{p_{l_2}}, \dots, \frac{\lambda}{p_{l_N}} \right) \geq 2,
\end{equation}
that is,
\[
\min_{c_1, c_2, \dots, c_N \in \mathbb{Z}^+} \left\{ \frac{c_1 \lambda}{p_{l_1}} = \frac{c_2 \lambda}{p_{l_2}} = \dots = \frac{c_N \lambda}{p_{l_N}} \right\} \geq 2.
\]
\end{theorem}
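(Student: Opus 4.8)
The plan is to reduce "signal reconstruction" to the injectivity of the array manifold on the field of view, and then to reduce that injectivity to the elementary number‑theoretic statement~(\ref{st14}). By the hypotheses assembled just before the theorem (narrowband signals, $N\ge D$, and full column rank of $\boldsymbol{A}(\boldsymbol\theta)$), recovering $\boldsymbol{s}(t)$ from $\boldsymbol{x}(t)$ in~(\ref{w2}) is equivalent to a one‑to‑one correspondence $\boldsymbol{x}(t)\leftrightarrow\boldsymbol\theta_{[1:D]}$, and the only remaining way this can fail is that two \emph{distinct} admissible directions produce proportional steering vectors, so that a source at one direction is indistinguishable from a source at the other. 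Since every entry of a steering vector has unit modulus and, under the normalization $p_{1}=0$ used in~(\ref{w2}), the leading entry equals $1$, "proportional" is the same as "equal". Hence the first step is simply to record: reconstruction is possible for every source configuration if and only if the map $\theta\mapsto\boldsymbol{a}(\theta)$ is injective on the open field of view $\theta\in(-\tfrac{\pi}{2},\tfrac{\pi}{2})$.

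The core step makes this injectivity explicit. Following the convention of~(\ref{st14}) we write the sensor positions as $p_{l_n}$ (the unit spacing $d$ absorbed), so $a_n(\theta)=e^{j2\pi p_{l_n}\sin\theta/\lambda}$ as in~(\ref{w8}); then $\boldsymbol{a}(\theta_1)=\boldsymbol{a}(\theta_2)$ holds iff $\tfrac{p_{l_n}}{\lambda}(\sin\theta_1-\sin\theta_2)\in\mathbb{Z}$ for every $n$. Put $v\triangleq\sin\theta_1-\sin\theta_2$; as $(\theta_1,\theta_2)$ ranges over $(-\tfrac{\pi}{2},\tfrac{\pi}{2})^2$ the number $v$ ranges over $(-2,2)$, with $v=0$ exactly when $\theta_1=\theta_2$ because $\sin$ is injective there. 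Thus an ambiguity exists iff some nonzero $v\in(-2,2)$ is simultaneously an integer multiple of each $\lambda/p_{l_n}$ (a sensor at the origin imposes no constraint and is dropped, i.e.\ $\lambda/p_{l_1}=+\infty$; replacing $p_{l_n}$ by $|p_{l_n}|$ is harmless). The set of all common integer multiples of $\lambda/p_{l_1},\dots,\lambda/p_{l_N}$ is a subgroup of $(\mathbb{R},+)$ lying inside a discrete subgroup (namely $(\lambda/p_{l_m})\mathbb{Z}$ for any nonzero $p_{l_m}$), hence equals either $\{0\}$ or $L\mathbb{Z}$ for a unique smallest positive $L$; this $L$ is by definition $\operatorname{lcm}(\lambda/p_{l_1},\dots,\lambda/p_{l_N})$, with $L=+\infty$ in the first case. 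Consequently an ambiguity exists iff there is a nonzero common multiple of magnitude $<2$, i.e.\ iff $L<2$, so reconstruction is possible iff $L\ge 2$, which is~(\ref{st14}); the boundary value $L=2$ is admissible precisely because the corresponding $v=\pm2$ forces $\{\theta_1,\theta_2\}=\{\pm\tfrac{\pi}{2}\}$, outside the open field of view. The second displayed form is just the definition of $L$: a positive common integer multiple of the $\lambda/p_{l_n}$ is exactly a number $t>0$ with $t=c_n\lambda/p_{l_n}$, $c_n\in\mathbb{Z}^+$, for all $n$, and $L$ is the least such $t$.

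I expect the main obstacle to be the handling of the $\operatorname{lcm}$ of positive reals that need not be commensurable: one must argue that the common‑integer‑multiple set is $\{0\}$ or $L\mathbb{Z}$, which is what makes $L$ a genuine least common multiple — a nonzero common multiple forces every ratio $p_{l_n}/p_{l_m}$ to be rational, after which $L$ is the ordinary lcm of the rationals $\lambda/p_{l_n}$, whereas incommensurable positions give $L=+\infty\ge 2$ and hence no ambiguity (consistent with coprime‑type arrays being unambiguous). The remaining points are bookkeeping — the endpoint $v=\pm2$, sensors at the origin, negative or repeated positions — and the first‑paragraph reduction, which I would keep short since it only re‑uses the assumptions already stated before the theorem. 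Everything else is a one‑line computation.
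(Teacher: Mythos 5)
Your proposal is correct and takes essentially the same route as the paper: reduce reconstructability to injectivity of $\theta\mapsto\boldsymbol{a}(\theta)$ on the field of view, and identify an ambiguity with a nonzero common integer multiple of the $\lambda/p_{l_n}$ lying in the range of $\sin\theta_1-\sin\theta_2$. Your version is in fact more complete than the paper's, which merely asserts the sufficiency direction, whereas you derive it from $\sin\theta_1-\sin\theta_2\in(-2,2)$ together with the observation that the common-multiple set is $\{0\}$ or $L\mathbb{Z}$, and you also handle edge cases (sensors at the origin, incommensurable spacings, the boundary $v=\pm 2$) that the paper leaves implicit.
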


\begin{proof}
\textbf{Sufficiency:}
When (\ref{st14}) is satisfied,
it ensures that the array can uniquely resolve all directions within the angular field of view, without aliasing.
Therefore, for any $\theta_i \in [-\frac{\pi}{2}, \frac{\pi}{2}]$, the mapping from $\theta_i$ to the steering vector $\mathbf{a}(\theta_i)$ is injective.

Hence, no two distinct $\theta_i \neq \theta_i'$ can yield the same steering vector, i.e., $\mathbf{a}(\theta_i) \neq \mathbf{a}(\theta_i')$,
and the array manifold $\mathbf{A}$ is uniquely determined by $\mathbf{\theta}_{[1:D]}$.
Therefore, there is a a unique correspondence between the received signal vector $\mathbf{x}(t)$ and $\mathbf{\theta}_{[1:D]}$
when a full-rank signal matrix $\mathbf{S}$ is given,
ensuring the source signals $\mathbf{s}(t)$ can be reconstructed from the received signal $\mathbf{x}(t)$.

\textbf{Necessity:}
When the signal is uniquely reconstructable, i.e., there exists a bijective mapping between $\mathbf{x}(t)$ and $\mathbf{\theta}_{[1:D]}$,
which implies that the steering vector satisfies
\[
\mathbf{a}(\theta_i) \neq \mathbf{a}(\theta_i'), \quad \forall\ \theta_i \neq \theta_i' \in \left[-\frac{\pi}{2}, \frac{\pi}{2}\right].
\]

By using the method of contradiction, assuming
\begin{equation}
\label{st13}
L = \operatorname{lcm}\left( \frac{\lambda}{p_{l_1}}, \dots, \frac{\lambda}{p_{l_N}} \right) < 2.
\end{equation}

Then, spatial aliasing arising from periodic spatial under-sampling occurs.
There may exist distinct $\theta_i \neq \theta_i'$ such that for all $n = 1,\dots,N$ and some integers $k_{l_n} \in \mathbb{Z}$ \cite{ThompsonAR2017},
\[
2\pi \frac{p_{l_n}}{\lambda} \sin(\theta_i) = 2\pi \left( \frac{p_{l_n}}{\lambda} \sin(\theta_i') + k_{l_n} \right),
\]
or equivalently,
\[
\sin(\theta_i) \equiv \sin(\theta_i') \mod \left( \frac{\lambda}{p_{l_n}} \right).
\]

This indicates that two distinct angles $\theta_i \ne \theta_i'$ yield identical steering vectors, i.e., $\mathbf{a}(\theta_i) = \mathbf{a}(\theta_i')$,
thereby violating the injective mapping from angle $\theta$ to its steering vector $\mathbf{a}(\theta)$.
To eliminate angular ambiguity, the total unambiguous angular range-governed by the least common multiple (LCM) of the spatial sampling intervals must span at least one full period, i.e.,
$L \geq 2$, which contradicts the constraint in (\ref{st13}).
Therefore, the LCM condition is both necessary and sufficient for guaranteeing a unique mapping between the $\mathbf{x}(t)$  and $\mathbf{\theta}_{[1:D]}$
to reconstruct the source signals $\mathbf{s}(t)$.

\end{proof}

For LR-SDA with $N$-sensors located at the set $\mathbb{S}=\mathbb{S}_1\cup \mathbb{S}_2 \cup \mathbb{S}_3$,
the necessary and sufficient conditions for signal reconstruction are given by the following theorem.
\begin{theorem}[Necessary and Sufficient Condition for Source Signal Reconstruction]
Suppose $\{p_{l_1}, p_{l_2}, ..., p_{l_N} \in \mathbb{S}_1 \cup \mathbb{S}_2 \cup \mathbb{S}_3 \}$, where $\mathbb{S}_1$, $\mathbb{S}_2$, and $\mathbb{S}_3$ are defined as in (\ref{st1}).
Then, the source signals $\mathbf{s}(t)$ can be reconstructed from $\mathbf{x}(t)$ and $\mathbf{\theta}_{[1:D]}$ if and only if there exist coefficients $\{c_n\}_{n=1}^N$ and a positive integer $k$ such that
\begin{equation}\label{eq:coef_condition_new}
\begin{cases}
c_{n_1} = \dfrac{k \cdot p_{l_{n_1}}}{\mathrm{LCM}_1}, \quad n_1 = 1,2,...,N_1,\\
c_{n_2} = \dfrac{k \cdot p_{l_{n_2}}}{\mathrm{LCM}_2}, \quad n_2 = N_1+1, ..., N_1+N_2-\eta,\\
c_{n_3} = \dfrac{k \cdot p_{l_{n_3}}}{\mathrm{LCM}_3}, \quad n_3 = N_1+N_2-\eta+1, ..., N_1+N_2,\\
k \geq \dfrac{2\cdot \mathrm{lcm}(\mathrm{LCM}_1,\mathrm{LCM}_2,\mathrm{LCM}_3)}{\lambda},
\end{cases}
\end{equation}
where
\begin{align*}
\mathrm{LCM}_1 &= \mathrm{LCM}(\delta, N_2+1, N_1),\\
\mathrm{LCM}_2 &= \dfrac{((N_1-1)(N_2+1)+N_2)!}{((N_1-1)(N_2+1)+\eta+1)!},\\
\mathrm{LCM}_3 &= \dfrac{(N_1(N_2+1)+\eta+\delta)!}{(N_1(N_2+1)+1+\delta)!}.
\end{align*}
\end{theorem}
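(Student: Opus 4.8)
The plan is to obtain this theorem as a corollary of the general reconstruction criterion established above (the ``$\operatorname{lcm}\ge 2$'' condition) by substituting the explicit LR-SDA geometry of Definition~8 / (\ref{st1}) into that criterion. First I would restate the criterion in the convenient form: $\mathbf{s}(t)$ is reconstructable from $\mathbf{x}(t)$ and $\boldsymbol{\theta}_{[1:D]}$ if and only if the smallest value $V$ that can be written simultaneously as $V=c_n\lambda/p_{l_n}$ for all $n$ with positive integers $c_n$ satisfies $V\ge 2$. Since such a $V$ exists exactly when $V$ is a common multiple of $\lambda/p_{l_1},\dots,\lambda/p_{l_N}$, the admissible tuples $(c_1,\dots,c_N)$ are precisely those for which $c_n\lambda/p_{l_n}$ equals a common positive-integer multiple $k\cdot\operatorname{lcm}_n(\lambda/p_{l_n})$, and each $c_n$ is then forced to be proportional to $p_{l_n}$ with that common multiplier $k$. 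This already explains the shape of (\ref{eq:coef_condition_new}): the per-block denominators are the sub-array aggregates and $k$ is the free multiplier subject to a lower bound coming from the threshold $2$.

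Next I would use associativity of $\operatorname{lcm}$ to split the global aggregate over the three sub-arrays,
\[
\operatorname{lcm}\bigl(\{\lambda/p : p\in\mathbb{S}_1\cup\mathbb{S}_2\cup\mathbb{S}_3\}\bigr)=\operatorname{lcm}\bigl(\mathrm{LCM}_1,\mathrm{LCM}_2,\mathrm{LCM}_3\bigr),
\]
where $\mathrm{LCM}_j$ is the aggregate taken over the positions of $\mathbb{S}_j$ alone; this is what produces the three separately indexed cases in (\ref{eq:coef_condition_new}). The core computational step is then to evaluate each $\mathrm{LCM}_j$ in closed form from (\ref{st1}). For $\mathbb{S}_1=\{\delta:N_2+1:(N_1-1)(N_2+1)+\delta\}\cdot d$, which is an arithmetic progression of length $N_1$ with common difference $N_2+1$ and offset $\delta$, I would reduce its aggregate to $\mathrm{LCM}(\delta,N_2+1,N_1)$ by a divisibility argument on the progression, invoking the Definition~8 side constraint $(N_1-1)(N_2+1)+N_2+\delta+1\in\mathbb{C}(\mathbb{S}_1,\mathbb{S}_1)$ together with the parity-driven choices $\eta=\lceil N_2/2\rceil-1$ and $\delta=\lfloor(N_2+1)/2\rfloor$. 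For $\mathbb{S}_2$ and $\mathbb{S}_3$, which are blocks of consecutive integers, I would evaluate the aggregate of a consecutive run and obtain the stated factorial-ratio expressions after substituting the endpoints of $\mathbb{S}_2$ and $\mathbb{S}_3$.

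Finally I would close the equivalence in both directions. For necessity: if $\mathbf{s}(t)$ is reconstructable, the general criterion forces $\operatorname{lcm}(\mathrm{LCM}_1,\mathrm{LCM}_2,\mathrm{LCM}_3)\ge 2$ after the $d=\lambda/2$ normalization; choosing any admissible tuple and writing it via the multiple-of-lcm characterization above yields the three coefficient formulas and the bound $k\ge 2\cdot\operatorname{lcm}(\mathrm{LCM}_1,\mathrm{LCM}_2,\mathrm{LCM}_3)/\lambda$. For sufficiency: given integers $\{c_n\}$ and $k$ of the stated form, one verifies that the common value $c_n\lambda/p_{l_n}$ equals a positive-integer multiple of $\operatorname{lcm}(\mathrm{LCM}_1,\mathrm{LCM}_2,\mathrm{LCM}_3)$ bounded below by $2$, so the general criterion applies and $\mathbf{s}(t)$ is reconstructable. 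I expect the main obstacle to be the number-theoretic closed forms for $\mathrm{LCM}_1,\mathrm{LCM}_2,\mathrm{LCM}_3$ — in particular showing that the aggregate of the progression $\mathbb{S}_1$ collapses exactly to $\mathrm{LCM}(\delta,N_2+1,N_1)$, which is where the parity choices and the $\mathbb{C}(\mathbb{S}_1,\mathbb{S}_1)$ membership constraint of Definition~8 must be used — together with carefully tracking the $\delta$-shift and the $\lambda$-versus-$d$ bookkeeping so that the threshold comes out as exactly $2$.
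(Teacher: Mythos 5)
Your proposal follows essentially the same route as the paper's proof: reduce to the general $\operatorname{lcm}\ge 2$ reconstruction criterion of Theorem~3, split the aggregate lcm over the three sub-arrays, substitute the closed-form expressions $\mathrm{LCM}_1$, $\mathrm{LCM}_2$, $\mathrm{LCM}_3$ for the arithmetic progression $\mathbb{S}_1$ and the consecutive blocks $\mathbb{S}_2$, $\mathbb{S}_3$, and convert the threshold into the lower bound on $k$. Both directions of the equivalence are handled exactly as in the paper, and the closed-form LCM evaluations you defer to a ``divisibility argument'' are likewise only asserted (via Algorithm~1 and a citation) in the paper itself.
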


\begin{proof}
\textbf{Sufficiency:}
When there exist coefficients $\{c_n\}_{n=1}^N$ and an integer $k$ satisfying the conditions in (\ref{eq:coef_condition_new}).

We analyze the three subsets separately:

\emph{1) Analysis of $\mathbb{S}_1$:}
The elements $\{p_{l_1},...,p_{l_{N_1}}\}$ form an arithmetic sequence with initial term $\delta$ and common difference $N_2+1$.
By number theory results~\cite{Xiao2023}, their least common multiple is calculated by Algorithm 1
\[
\mathrm{LCM}_1 = \mathrm{LCM}(\delta, N_2+1, N_1).
\]
The assigned coefficients satisfy
\[
c_{n_1} = \dfrac{k \cdot p_{l_{n_1}}}{\mathrm{LCM}_1}.
\]

\emph{2) Analysis of $\mathbb{S}_2$:}
The elements $\{p_{l_{N_1+1}},...,p_{l_{N_1+N_2-\eta}}\}$ are structured as defined in (\ref{st2}), and their least common multiple is
\[
\mathrm{LCM}_2 = \dfrac{((N_1-1)(N_2+1)+N_2)!}{((N_1-1)(N_2+1)+\eta+1)!}.
\]
The corresponding coefficients satisfy
\[
c_{n_2} = \dfrac{k \cdot p_{l_{n_2}}}{\mathrm{LCM}_2}.
\]

\emph{3) Analysis of $\mathbb{S}_3$:}
Similarly, for the elements $\{p_{l_{N_1+N_2-\eta+1}},...,p_{l_{N_1+N_2}}\}$, we have
\[
\mathrm{LCM}_3 = \dfrac{(N_1(N_2+1)+\eta+\delta)!}{(N_1(N_2+1)+1+\delta)!},
\]
and the coefficients satisfy
\[
c_{n_3} = \dfrac{k \cdot p_{l_{n_3}}}{\mathrm{LCM}_3}.
\]

\emph{4) Calculation of Overall Least Common Multiple:}
The overall least common multiple of $\mathrm{LCM}_1$, $\mathrm{LCM}_2$ and $\mathrm{LCM}_3$ is computed as
\[
\mathrm{lcm}(\mathrm{LCM}_1,\mathrm{LCM}_2,\mathrm{LCM}_3)\sx \triangleq \sx \dfrac{\dfrac{\mathrm{LCM}_1\sx \cdot\sx \mathrm{LCM}_2}{\gcd(\mathrm{LCM}_1,\mathrm{LCM}_2)} \sx\cdot\sx \mathrm{LCM}_3}{\gcd( \dfrac{\mathrm{LCM}_1\sx\cdot\sx \mathrm{LCM}_2}{\gcd(\mathrm{LCM}_1,\sx \mathrm{LCM}_2)}, \mathrm{LCM}_3 )}.
\]

\emph{5) Guarantee of Source Reconstruction:}
The condition on $k$ ensures that
\[
k \geq \dfrac{2 \cdot \mathrm{lcm}(\mathrm{LCM}_1,\mathrm{LCM}_2,\mathrm{LCM}_3)}{\lambda}.
\]
Accordingly, the least common multiple
\[
\mathrm{lcm}\left(\dfrac{\lambda}{p_{l_1}}, \dfrac{\lambda}{p_{l_2}}, ..., \dfrac{\lambda}{p_{l_N}}\right) \geq 2,
\]
which satisfies the requirement in Theorem~3, thus guaranteeing the successful reconstruction of the source signals $\mathbf{s}(t)$.

\begin{table}[h]
\begin{center}
\label{tab1}
\renewcommand{\arraystretch}{0.9}
\begin{tabular}{ l }
\hline
\textbf{Algorithm 1: Calculate the least common multiple } \\
\textbf{ \ \ \ \ \ \ \ \ \ \ \ \ \ \ \ \ of sequence} \\
\hline
\bf{Input}: Set $\mathbb{S}$.\\
$\beta_1=\mathbb{S}[1]$, $\beta_2=\mathbb{S}[2]-\mathbb{S}[1]$, $N=length(\mathbb{S})$. \\
\textbf{Function LCM($\beta_1$,$\beta_2$):}\\
\ \ \ \ return ($|\beta_1 \cdot (\beta_1+\beta_2) /gcd(\beta_1,\beta_1+\beta_2)|$)\\
end\\
\textbf{Function LCM-multiple(numbers):}\\
\ \ \ \ Initialization: set result = numbers[0]\\
\ \ \ \ for each number in numbers[1:]:\\
\ \ \ \ \ \ \ \ result = lcm(result,number)\\
\ \ \ \ end\\
\ \ \ \ return result\\
end\\
\textbf{Function LCM-sequence($\beta_1$,$\beta_2$,$N$):}\\
\ \ \ \ Define sequence as empty list\\
\ \ \ \ for i from 0 to $N-1$:\\
\ \ \ \ \ \ \ \ add ($\beta_1+i\times \beta_2$) to sequence\\
\ \ \ \ end\\
\ \ \ \ return lcm-multiple(sequence) $\triangleq$ LCM($\beta_1$, $\beta_2$, $N$)\\
end\\
\textbf{Output}: LCM($\beta_1$, $\beta_2$, $N$) Least common multiple of $\mathbb{S}$.\\
\hline
\end{tabular}
\end{center}
\end{table}

\textbf{Necessity:}
Conversely, if the source signals $\mathbf{s}(t)$ can be reconstructed according to Theorem~3, then the least common multiple satisfies
\[
\mathrm{lcm}\left(\dfrac{\lambda}{p_{l_1}}, \dfrac{\lambda}{p_{l_2}}, ..., \dfrac{\lambda}{p_{l_N}}\right) \geq 2.
\]

Thus, it is necessary that $k$ be chosen sufficiently large according to
\[
k \geq \dfrac{2\cdot \mathrm{lcm}(\mathrm{LCM}_1,\mathrm{LCM}_2,\mathrm{LCM}_3)}{\lambda},
\]
with coefficients $\{c_n\}_{n=1}^N$ assigned accordingly as in (\ref{eq:coef_condition_new}).

Hence, the conditions in (\ref{eq:coef_condition_new}) are also necessary.


\end{proof}

\section{Weight Function}
As already mentioned in Section III, the sensor positions in the co-arrays represented by the multisets $\Phi_1$, $\Phi_2$, $\Phi_3$ and $\Phi_4$,
may be repetitive. Therefore, the output corresponding to each sensor located at a particular
position is obtained by averaging the outputs of all sensor positions.
Definitely, this averaging makes the estimate of the corresponding sensor better by decreasing its variance \cite{Dias2017}
and makes the output more stable.
Formally, the stability of the designed array is studied using the weight function.
Therefore, the weight function plays an important
role in designing the array. The weight function of any array $\Phi$ is defined as a function that maps every element
$z \in \Phi^u$ to a positive integer giving the multiplicity of the element $z$ in the multiset $\Phi$.
Therefore, the weight function of the SO-ECA $\Phi^u$ associated with the proposed LR-SDA, is denoted as $W_{\Phi^u}^{\mathbb{S}}$,
and defined as follows\\[-2pt]
\begin{equation}\nonumber
\begin{aligned}
W_{z\in \Phi^u}^{\mathbb{S}^2}\triangleq
\left|
\left\{                 
  \begin{array}{c|c}   
                      &  \{ p_{l_1}, p_{l_2}\}\in \mathbb{S}\\
  \{ p_{l_1}, p_{l_2}\} & z\in \left\{ \begin{array}{c}  p_{l_1}+p_{l_2} \\ p_{l_1}-p_{l_2} \\ -p_{l_1}+p_{l_2} \\  -p_{l_1}-p_{l_2} \end{array}\right\}
  \end{array}
\right\}
\right|.
\end{aligned}
\end{equation}

The proposed LR-SDA consists of three arrays with sensors located at sets $\mathbb{S}_1$, $\mathbb{S}_2$ and $\mathbb{S}_3$.
To lighten the notations, the $\mathbb{S}_2\cup\mathbb{S}_3\triangleq \mathbb{S}_{23}$ is considered as one sub-array.
Hence the weight function $W_{z\in \Phi^u}^{\mathbb{S}}$ can be calculated as follows
\begin{equation}
\label{wto3}
W_{z\in \Phi^u}^{\mathbb{S}^2}=W_{z\in \Phi^u}^{\mathbb{S}_1^2}+W_{z\in \Phi^u}^{\mathbb{S}_1\mathbb{S}_{23}}+W_{z\in \Phi^u}^{\mathbb{S}_{23}^2},
\end{equation}
where $W_{z\in \Phi^u}^{\mathbb{S}_1^2}$ and $W_{z\in \Phi^u}^{\mathbb{S}_{23}^2}$ are considered as $\{ p_{l_1}, p_{l_2}\}\in\mathbb{S}_1^2 $
and $\{ p_{l_1}, p_{l_2}\}\in\mathbb{S}_{23}^2 $, respectively. So termed as weight functions of types \textit{'inter-ULA-I'} and \textit{'inter-ULA-II'}, respectively.
Whereas, $W_{z\in \Phi^u}^{\mathbb{S}_1\mathbb{S}_{23}}$ is considered as $\{ p_{l_1}, p_{l_2}\}\in\mathbb{S}_1\mathbb{S}_{23} $, so termed as weight function of type
\textit{'inter-ULA-12'}. Moreover, the weight function of type \textit{'inter-ULA-12'} is calculated by considering all possible combinations, i.e.
\begin{equation}
W_{z\in \Phi^u}^{\mathbb{S}_1\mathbb{S}_{23}}=W_{z\in \Phi^u}^{\{p_{l_1}\in\mathbb{S}_1,p_{l_2}\in\mathbb{S}_{23}\}}
+W_{z\in \Phi^u}^{\{p_{l_2}\in\mathbb{S}_1,p_{l_1}\in\mathbb{S}_{23}\}},
\end{equation}
Since SO-ECA $\Phi$ is the union of $\Phi_1$, $\Phi_2$, $\Phi_3$ and $\Phi_4$,
the weight function of LR-SDA $\mathbb{S}$ can be expressed as the sum of the above-mentioned three co-arrays as follows
\begin{equation}
\label{wto4}
\begin{aligned}
&W_{z\in \Phi^u}^{\mathbb{S}_1^2}=W_{z\in \Phi_1^u}^{\mathbb{S}_1^2}+W_{z\in \Phi_2^u}^{\mathbb{S}_1^2}+W_{z\in \Phi_3^u}^{\mathbb{S}_1^2}+W_{z\in \Phi_4^u}^{\mathbb{S}_1^2},\\
&W_{z\in \Phi^u}^{\mathbb{S}_{23}^2}=W_{z\in \Phi_1^u}^{\mathbb{S}_{23}^2}+W_{z\in \Phi_2^u}^{\mathbb{S}_{23}^2}+W_{z\in \Phi_3^u}^{\mathbb{S}_{23}^2}+W_{z\in \Phi_4^u}^{\mathbb{S}_{23}^2},\\
&W_{z\in \Phi^u}^{\mathbb{S}_1\mathbb{S}_{23}}=W_{z\in \Phi_1^u}^{\{p_{l_1}\in\mathbb{S}_1,p_{l_2}\in\mathbb{S}_{23}\}}
+W_{z\in \Phi_1^u}^{\{p_{l_2}\in\mathbb{S}_1,p_{l_1}\in\mathbb{S}_{23}\}}\\
&\ \ \ \ \ \ \ \ \ \ +W_{z\in \Phi_2^u}^{\{p_{l_1}\in\mathbb{S}_1,p_{l_2}\in\mathbb{S}_{23}\}}
+W_{z\in \Phi_2^u}^{\{p_{l_2}\in\mathbb{S}_1,p_{l_1}\in\mathbb{S}_{23}\}}\\
&\ \ \ \ \ \ \ \ \ \ +W_{z\in \Phi_3^u}^{\{p_{l_1}\in\mathbb{S}_1,p_{l_2}\in\mathbb{S}_{23}\}}
+W_{z\in \Phi_3^u}^{\{p_{l_2}\in\mathbb{S}_1,p_{l_1}\in\mathbb{S}_{23}\}}\\
&\ \ \ \ \ \ \ \ \ \ +W_{z\in \Phi_4^u}^{\{p_{l_1}\in\mathbb{S}_1,p_{l_2}\in\mathbb{S}_{23}\}}
+W_{z\in \Phi_4^u}^{\{p_{l_2}\in\mathbb{S}_1,p_{l_1}\in\mathbb{S}_{23}\}}.
\end{aligned}
\end{equation}


Substituting (\ref{wto4}) into (\ref{wto3}), the results in the overall weight function $W_{z\in \Phi^u}^{\mathbb{S}^2}$ can be obtained.

Since NADiS, TNA-I, TNA-II and the proposed LR-SDA are all designed based on SCA and DCA,
comparing the weight functions of them with $N=9$ physical sensors in Fig. 3.
It can be seen that the consecutive lags of LR-SDA more than those of other three arrays.
In addition, the number of each lag is shown in the Fig. 3, which show the contributions to the stability of arrays.

\begin{figure}[h]
  \centering
  \subfigure[]{
    \label{fig:subfig:onefunction}
    \includegraphics[scale=0.3]{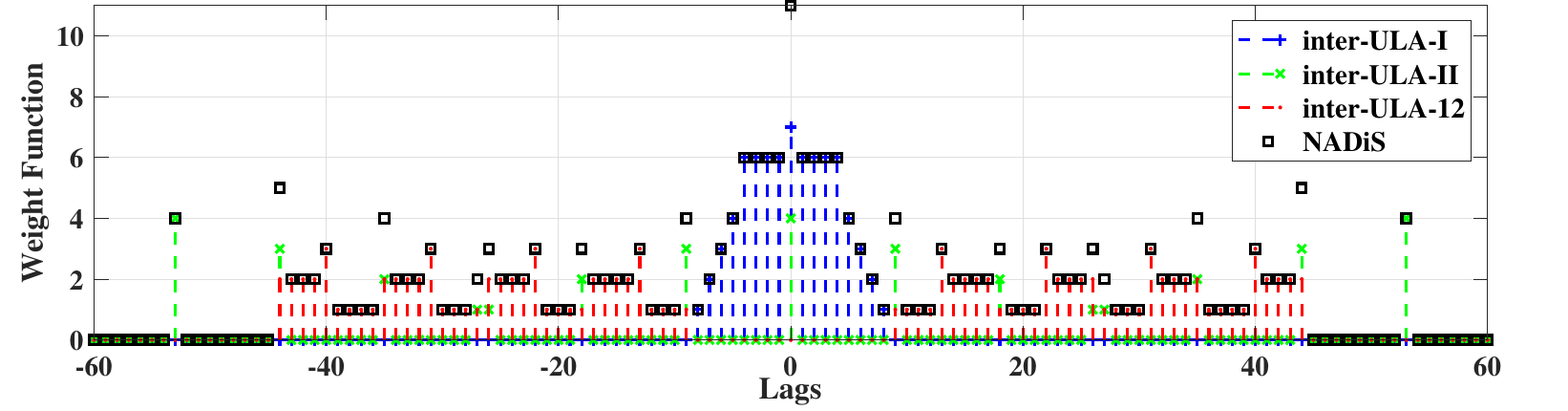}}
  \hspace{0in} 
  \subfigure[]{
    \label{fig:subfig:threefunction}
    \includegraphics[scale=0.3]{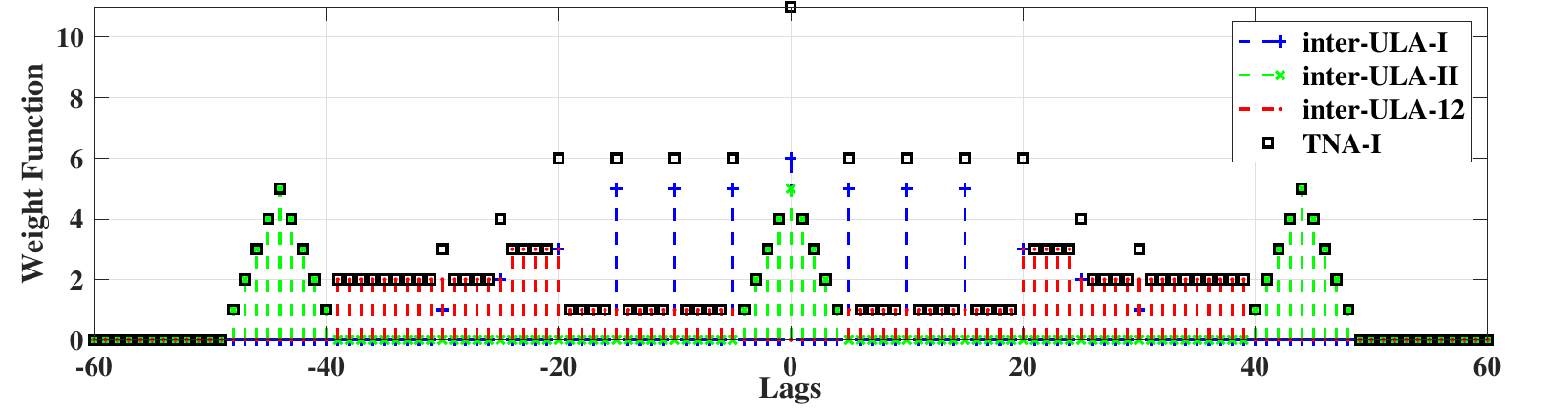}}
  \subfigure[]{
    \label{fig:subfig:threefunction}
    \includegraphics[scale=0.3]{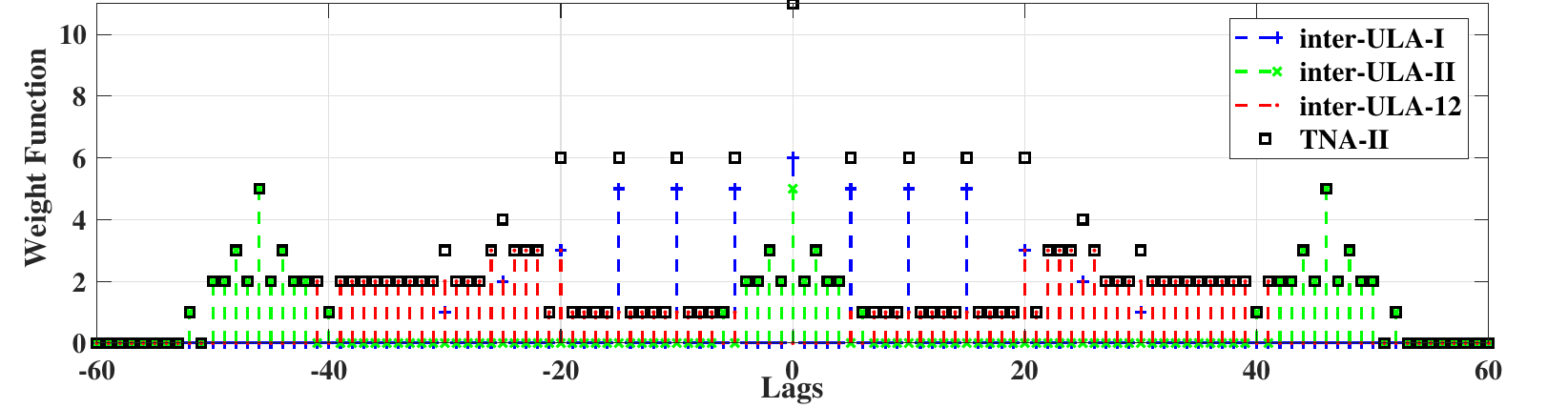}}
  \subfigure[]{
    \label{fig:subfig:threefunction}
    \includegraphics[scale=0.3]{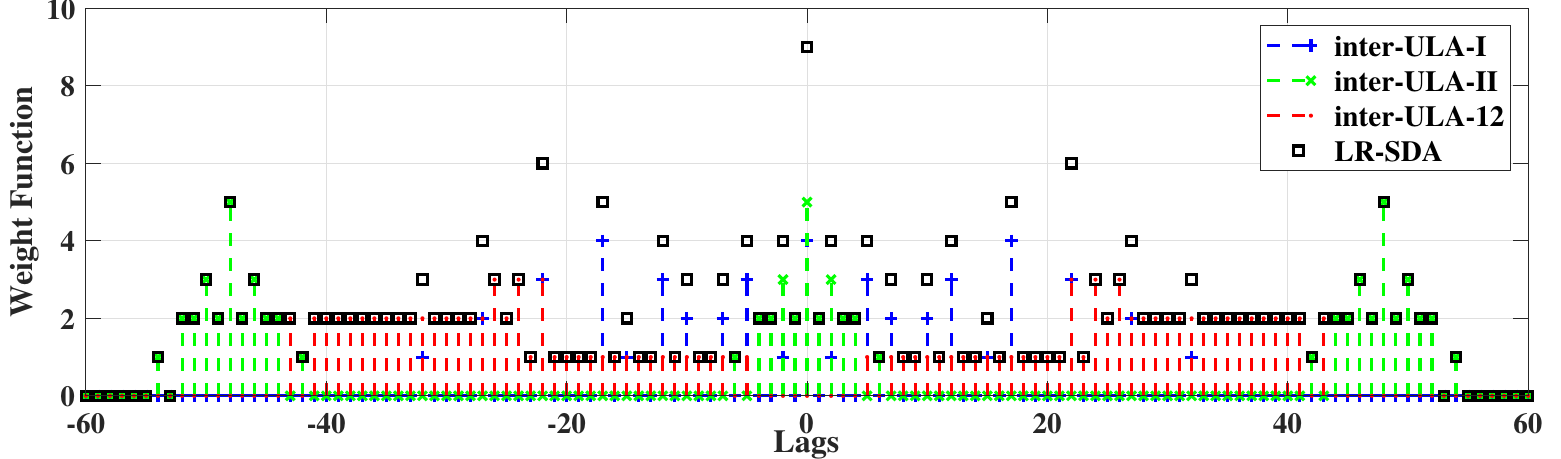}}
  \caption{Synthesis of weight function of the SO-ECA associated with four arrays when the number of physical sensors $N=9$.
   (a) NADiS. (b) TNA-I. (c) TNA-II. (d) LR-SDA. }
\end{figure}

\section{Redundancy}
Redundancy quantifies the discrepancies between the number of sensors in consecutive segment and total number of sensors for the virtual array of an array.
Further, in order to define the redundancy of LR-SDA, the redundancies of DCA and SCA are introduced as follows.
\begin{definition}
(Redundancy\cite{Hoctor1990}). The redundancy of a $N$-sensors SLA with a consecutive sum co-array is
\begin{equation}
R_S=\frac{N(N+1)/2}{2N+1}\geq 1.
\end{equation}
\end{definition}

\begin{definition}
(Redundancy \cite{Moffet1968}). The redundancy of a $N$-sensors SLA with a consecutive difference co-array is
\begin{equation}
R_D=\frac{N(N-1)/2}{L}\geq 1,
\end{equation}
where $L$ is the aperture of the consecutive difference co-array.
\end{definition}

\subsection{Redundancy of SO-ECA}

In order to define the redundancy of SO-ECA, the size of $\mathbb{U}$ is studied.
According to the redundancy definitions of DCA and SCA, it can be seen that $2N+1$ and $2L+1$ are the consecutive segment in SCA and DCA.
Further, it is can be known that the bounds of $R_S$ and $R_D$ are characterized the size of $N$ and $L$.
Therefore, the SO-ECA redundancy is defined as follows by generalizing the redundancy definition idea of SCA and DCA.
\begin{definition}
The redundancy of SO-ECA is defined as
\begin{equation}
R_{w}=\frac{\tilde{k}(N)}{U},
\end{equation}
where $\tilde{k}(N)=(N(N-1)/2+N(N+1)/2)=N^2$ denotes the one-side maximal size of $\Phi$ with $N$-sensors and $|\mathbb{U}|$ is the consecutive segment of $\Phi^u$,
which means $\mathbb{U}(\mathbb{S},\mathbb{S}')=[-U:U]\subseteq\Phi^u$.
\end{definition}

The redundancy of SO-ECA quantifies the discrepancies between the consecutive segment $\mathbb{U}$ and the maximum number of sensors obtained from SO-ECA.
If $R_{w} = 1$, all the second-order differences and sums are distinct with a ULA.
If $R_{w} >1$, either the normalized size $\Phi^u$ is smaller than 1, or there are holes within its $\Phi^u$.

The definition of $R_{w}$ uses two one-sided quantities $\tilde{k}(N)$ and $U$ instead of two-sided quantities $k(N)$ and $\mathbb{U}$.
This definition follows the convention of the second-order redundancy in \cite{Moffet1968},
which can be further traced back to the topic of difference basis in number theory \cite{Redei1948,Erdos1948,Leech1956}.
Next, the bounds of $R_{w}$ is discussed. Since $U$ may be $\{0\}$, in which case $R_{w} = \infty$,
we skip the discussion on the upper bound of $R_{w}$.

\begin{theorem}
The redundancy $R_{w}$ of  SO-ECA satisfies
\begin{equation}
\label{wto12}
\begin{aligned}
R_{w}>L_2(N):&=(1+\frac{2}{3\pi})\frac{\tilde{k}(N)}{\left( \left(\begin{matrix} N \\ 2 \\ \end{matrix} \right) \right)}
=(1+\frac{2}{3\pi})\frac{2N^2}{N^2+N}.
\end{aligned}
\end{equation}
\end{theorem}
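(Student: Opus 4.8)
The plan is to read the statement as an upper bound on the length of the consecutive segment of the SO-ECA. Since $R_{w}=\tilde k(N)/U=N^{2}/U$ and $\bigl(\!\binom{N}{2}\!\bigr)=\binom{N+1}{2}$, the asserted inequality $R_{w}>L_{2}(N)$ is algebraically equivalent to
\[
U<\frac{1}{1+\frac{2}{3\pi}}\binom{N+1}{2},
\]
so it suffices to show that no $N$-sensor array can produce a run $[-U:U]\subseteq\Phi^{u}$ longer than this. I would establish this by the classical trigonometric-polynomial (large-sieve / R\'edei--R\'enyi) method that the paper already points to through \cite{Redei1948,Leech1956}.

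Concretely, set $f(t)=\sum_{n=1}^{N}e^{\,j2\pi p_{l_{n}}t}$ and put $g(t)=\bigl(\operatorname{Re}f(t)\bigr)^{2}\ge 0$. Expanding the square writes $g$ as a cosine polynomial whose frequencies are exactly $\{\pm p_{l_{1}}\pm p_{l_{2}}\}=\Phi^{u}$ (Remark~1), the coefficient of a frequency $m$ being a positive multiple of the number of sum-pairs and difference-pairs realising $m$. Three facts are then immediate: $\widehat g(0)=\int_{0}^{1}g=\tfrac{N+1}{2}$ (only the self-terms $p_{l_{n}}-p_{l_{n}}$ survive the integration, the $0$-sensor contributing weight $1$ and each other sensor weight $\tfrac12$); $g(0)=(\operatorname{Re}f(0))^{2}=N^{2}$, whence $\sum_{m}\widehat g(m)=N^{2}$; and $\widehat g(m)\ge 0$ for every $m$, with $\widehat g(m)\ge c$ (an explicit constant, essentially $\tfrac14$) for all $|m|\le U$, precisely because $[-U:U]\subseteq\Phi^{u}$. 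Finally $g$ has degree at most twice the aperture and is non-negative, which by Fej\'er--Riesz limits how peaked $g$ may be relative to the spread of its spectrum.

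The heart of the argument is to couple these constraints with a well-chosen non-negative kernel $K$ of degree $U$: evaluating $\int_{0}^{1}g(t)K(t)\,dt=\sum_{|m|\le U}\widehat g(m)\widehat K(m)$ from below via $\widehat g(m)\ge c$, and from above via $g\ge 0$ together with the normalisations $\widehat g(0)=\tfrac{N+1}{2}$ and $g(0)=N^{2}$, yields a linear inequality between $U$ and $\binom{N+1}{2}$. Choosing $K$ optimally — which is exactly where the constant $\tfrac{2}{3\pi}$ appears, as the extremal value of the same cosine-sum integral that produces the R\'edei--R\'enyi bound — collapses this to $U<\binom{N+1}{2}/(1+\tfrac{2}{3\pi})$, and substituting back gives $R_{w}=N^{2}/U>(1+\tfrac{2}{3\pi})\,2N^{2}/(N^{2}+N)=L_{2}(N)$. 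I expect two points to be the real work: (i) identifying the extremal kernel and carrying through its optimisation so as to land the sharp constant $1+\tfrac{2}{3\pi}$ rather than a weaker one; and (ii) the bookkeeping that the sum-pairs ($\binom{N+1}{2}$ of them) and the difference-pairs ($\binom{N}{2}$ of them) jointly fill the coefficient budget, so that one does not lose a spurious factor by analysing the SCA and DCA of the array in isolation. A secondary nuisance is controlling the $O(1)$ boundary terms so that the inequality holds strictly for every finite $N$, not merely asymptotically.
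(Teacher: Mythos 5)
Your proposal is essentially the paper's own argument: the paper likewise forms the non-negative trigonometric polynomial $f(y)=\sum_{n_1,n_2}[\cos((l_{n_1}+l_{n_2})y)+\cos((l_{n_1}-l_{n_2})y)]=2\bigl(\operatorname{Re}\sum_n e^{il_n y}\bigr)^2$, whose spectrum is the SO-ECA with non-negative coefficients bounded below on $[-U,U]$, selects one representative pair per lag so that the selected terms sum to a Dirichlet kernel while the remaining terms are bounded by their count, and evaluates at $y=3\pi/(2U+1)$ to extract the constant $1+\tfrac{2}{3\pi}$ before rearranging into the redundancy bound. Your ``optimal kernel'' step is the dual phrasing of that single point evaluation, and the sum-versus-difference bookkeeping you flag as the real work is present (and treated rather loosely) in the paper's proof as well.
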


\begin{proof}
Inspired by the proof method in \cite{Redei1948} and \cite{Linel1993}, the lower bound of $R_{w}$ is given by a function of $y$
\begin{equation}\nonumber
\begin{aligned}
f(y)&=\sum_{\substack{0\leq n_1 \leq N-1 \\ 0\leq n_2 \leq N-1}}[\cos((l_{n_1}+l_{n_2})y)+\cos((l_{n_1}-l_{n_2})y)]\\
\end{aligned}
\end{equation}

By the definition of $\mathbb{U}$, for each integer $u \in [-U, U]$, we select one 2-tuple
$T_u= (n_1(u), n_2(u))$ such that $n_1(u) + n_2(u)= u_1$, $n_1(u) - n_2(u)= u_2$, $0 \leq n_1(u) \leq N-1$,
and $0 \leq n_2(u) \leq N_1$. Denoting $\mathcal{T} :=\{T_u | u \in [-U, U]\}$, we can decompose $f(y)$ into

\begin{equation}\nonumber
\begin{aligned}
f(y)&=\sum_{( n_1, n_2) \in \mathcal{T}}[\cos((l_{n_1}+l_{n_2})y)+\cos((l_{n_1}-l_{n_2})y)]\\
&+\sum_{( n_1, n_2) \notin \mathcal{T}}[\cos((l_{n_1}+l_{n_2})y)+\cos((l_{n_1}-l_{n_2})y)].
\end{aligned}
\end{equation}

Since $\cos y\leq1$ for all $y\in \mathbb{R}$ and $|\mathcal{T}|=2U+1$, we can get
\begin{equation}\nonumber
\begin{aligned}
f(y)&\leq\sum_{u_1=-U}^{U}\cos(u_1y)+\sum_{u_2=-U}^{U}\cos(u_2y)
+2\left( \left(
               \begin{matrix}
                 N \\
                 2                    \\
               \end{matrix}
         \right) \right)
             -|\mathcal{T}|.
\end{aligned}
\end{equation}

To further discuss the value of $f(y)$, $\sum\limits_{u=-U}^{U}\cos(uy)$ is considered as follows.
Since $\cos(uy)=\frac{e^{iuy}+e^{-iuy}}{2}$, we can get
\begin{equation}
\label{wto6}
\begin{aligned}
\sum_{u=-U}^{U}\cos(uy)=\frac{1}{2}(\sum_{u=-U}^{U}e^{iuy}+\sum_{u=-U}^{U}e^{-iuy}),
\end{aligned}
\end{equation}
for the positive of exponent in (\ref{wto6}), it can be rewritten as follows
\begin{equation}
\label{wto7}
\begin{aligned}
\Upsilon&\triangleq\sum_{u=-U}^{U}e^{iuy}=\sum_{u=0}^{U}e^{iuy}+\sum_{u=-U}^{-1}e^{iuy}.
\end{aligned}
\end{equation}

According to the symmetry of $\Upsilon$, we can get
\begin{equation}\nonumber
\begin{aligned}
\sum_{u=1}^{U}e^{-iuy}=\sum_{u=-U}^{-1}e^{iuy}.
\end{aligned}
\end{equation}

Therefore, $\Upsilon$ in (\ref{wto7}) can be transformed as
\begin{equation}\nonumber
\begin{aligned}
\Upsilon&=1+\sum_{u=1}^{U}(e^{iuy}+e^{-iuy})
=1+\sum_{u=1}^{U}2\cos(uy).
\end{aligned}
\end{equation}

In addition, for $\sum_{u=-U}^{U}e^{iux}=e^{-iUy}\sum_{k=0}^{2U}e^{iky}$,  we can get the following equation from the principle of summing geometric series
\begin{equation}
\label{wto8}
\begin{aligned}
\sum_{k=0}^{K}q^k=\frac{1-q^{N+1}}{1-q},\ \ (q=e^{iy}),
\end{aligned}
\end{equation}
substituting (\ref{wto8}) into (\ref{wto7}), we can obtain
\begin{equation}\nonumber
\begin{aligned}
\sum_{u=-U}^{U}e^{iuy}=e^{-iUy}\cdot \frac{1-e^{i(2U+1)y}}{1-e^{iy}}.
\end{aligned}
\end{equation}

Since $1-e^{iy}=1-\cos y-i\sin y$, the modulo of it is as follows
\begin{equation}
\label{wto9}
\begin{aligned}
|1-e^{iy}|=2\sin\frac{y}{2}.
\end{aligned}
\end{equation}

Similarly, for $1-e^{i(2U+1)y}=1-\cos((2U+1)y)-i\sin((2U+1)y)$, the modulo of it is as follows
\begin{equation}
\label{wto10}
\begin{aligned}
|1-e^{i(2U+1)y}|=2\sin[(U+\frac{1}{2})y].
\end{aligned}
\end{equation}

Therefore, substituting (\ref{wto9}) and (\ref{wto10}) into (\ref{wto6})  we can get
\begin{equation}\nonumber
\begin{aligned}
\sum_{u=-U}^{U}\cos(uy)=\frac{\sin[(U+\frac{1}{2})y]}{\sin\frac{y}{2}}.
\end{aligned}
\end{equation}

We can get the following inequality due to the non negativity of $f(y)$ for all $y$
\begin{equation}\nonumber
\begin{aligned}
\frac{2\sin[(U+\frac{1}{2})y]}{\sin\frac{y}{2}}+
2\left( \left(
               \begin{matrix}
                 N \\
                 2                    \\
               \end{matrix}
         \right) \right)
-2U-1 \geq f(y)\geq0.
\end{aligned}
\end{equation}

By rearranging this inequality and selecting $x=\frac{3\pi}{2U+1}$ for $U \geq 1$, we can obtain
\begin{equation}
\label{wto11}
\begin{aligned}
\frac{\left( \left(
               \begin{matrix}
                 N \\
                 2                    \\
               \end{matrix}
         \right) \right)}{U} \geq (1 - \frac{\sin[(U_2+\frac{1}{2})y]}{U\sin\frac{y}{2}}) + \frac{1}{2U} > (1+\frac{2}{3\pi}).
\end{aligned}
\end{equation}

The desired inequality can be obtained by multiplying (\ref{wto11}) with $\tilde{k}(N)/ \left( \left(\begin{matrix} N \\ 2 \\ \end{matrix} \right) \right)$ on both sides.

\end{proof}

The lower bound $L_2(N)$ leads to insights into the SO-ECA. Among all sparse arrays, the size of $\mathbb{U}$ is strictly smaller
than $k(N)$ for $N \geq 2$. The reason is as follows.
Since it can be shown that $L_2(N)$ is an increasing function of $N$,
the redundancy of SO-ECA satisfies $R_{w} > L_2(N) \geq L_2(1) \approx 1.2122$ for $N \geq 1$.
This relation indicates that for $N \geq 2$, $\tilde{k}(N)>U$, or equivalently $|\mathbb{U}| < k(N)$.

For sufficiently large $N$, we can derive a upper bound for $|\mathbb{U}|$,
which is stronger than $k(N)$. Based on (\ref{wto12}), $\tilde{k}(N)/L_2(N)$ is
approximately $0.4125N^2$ in this region. As a result, we have the following asymptotic relation for $|\mathbb{U}|$

\begin{equation}
|\mathbb{U}|=2U+1\leq 1+2\frac{\tilde{k}(N)}{L_2(N)}\approx 0.8249 N^2.
\end{equation}

\subsection{Redundancy of LR-SDA}
For the special case of SO-ECA, the $R_{w}$ of LR-SDA is considered.
We can get the $\mathbb{U}=2N_1(N_2+1)+2(\lceil \frac{N_2}{2} \rceil-1)+2\delta$ according to (\ref{st1}).
Therefore, the $R_{w}$ of the LR-SDA can be obtained based on the Definition 8 as follows
\begin{equation}
\begin{aligned}
&R_{w}^1=\frac{N^2}{2N_1(N_2+1)+2(\lceil \frac{N_2}{2} \rceil-1)+2\delta},\\
&\delta = \lfloor \frac{N_2 + 1}{2} \rfloor \ and \  (N_1 - 1)(N_2 + 1) + N_2+ \delta + 1 \in \mathbb{C}(\mathbb{S}_1,\mathbb{S}_1).
\end{aligned}
\end{equation}

\begin{corollary}
The upper and lower bounds of the redundancy $R_w^1$ with the number of physical sensors varying from 2 to infinity are
\begin{equation}\nonumber
\begin{aligned}
1\leq R_w^1\leq 2.
\end{aligned}
\end{equation}
\end{corollary}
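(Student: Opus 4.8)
The plan is to argue directly from the closed form
\[
R_w^1=\frac{N^2}{2N_1(N_2+1)+2\eta+2\delta},\qquad N=N_1+N_2,
\]
reducing each of the two inequalities to an elementary quadratic estimate in $N_1$ and $N_2$. The first step removes the rounding operators: a case split on the parity of $N_2$ gives $\eta+\delta=\lceil\frac{N_2}{2}\rceil-1+\lfloor\frac{N_2+1}{2}\rfloor=N_2-1$ when $N_2$ is even and $\eta+\delta=N_2$ when $N_2$ is odd. Hence the denominator $\mathcal D:=2N_1(N_2+1)+2(\eta+\delta)$ satisfies
\[
2N_1(N_2+1)+2N_2-2\ \le\ \mathcal D\ \le\ 2N_1(N_2+1)+2N_2,
\]
so that $R_w^1$ is squeezed between $N^2/\bigl(2N_1(N_2+1)+2N_2\bigr)$ and $N^2/\bigl(2N_1(N_2+1)+2N_2-2\bigr)$, and it remains to bound these two rational functions for the admissible configurations of Definition~8.

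For the lower bound $R_w^1\ge1$ it suffices to show $N^2\ge\mathcal D$; using the upper estimate of $\mathcal D$ this becomes $(N_1+N_2)^2\ge 2N_1N_2+2N_1+2N_2$, i.e.\ $N_1(N_1-2)+N_2(N_2-2)\ge0$, which holds whenever $N_1\ge2$ and $N_2\ge2$. For the maximum-DOF split of Theorem~2 the subarray sizes meet this once $N$ is not too small, and the finitely many exceptional small $N$ are checked by plugging the explicit $N_1,N_2$ (and, when $N_2$ is even, the sharper value $\mathcal D=2N_1N_2+2N_1+2N_2-2$) straight into $R_w^1$. This also shows that the value $1$ is attained, at best, only at a boundary configuration.

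For the upper bound $R_w^1\le2$ the mirror-image reduction uses the lower estimate of $\mathcal D$: it suffices that $2\bigl(2N_1N_2+2N_1+2N_2-2\bigr)\ge(N_1+N_2)^2$, which rearranges to
\[
(N_1-N_2)^2\ \le\ 4(N_1+N_2-1)=4(N-1).
\]
Unlike the lower bound, this is \emph{not} true for arbitrary positive integers $N_1,N_2$ — it fails once the two subarrays are sufficiently unbalanced — so here the proof must use the actual sensor distribution. Substituting the Theorem~2 values $N_1=N-N_2$ with $N_2=\lceil\frac{N-1}{2}\rfloor-\eta$ in the case $\eta=1$ and $N_2=\lceil\frac{2N-1}{4}\rfloor-\eta$ in the case $\eta\ge2$, one checks that $|N_1-N_2|$ stays bounded by an absolute constant, so $(N_1-N_2)^2=O(1)$ while $4(N-1)\to\infty$; the displayed inequality therefore holds for all $N$ past a small threshold, and the remaining small $N$ are handled by direct substitution. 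A useful consistency check is that with $N_1\approx N_2\approx N/2$ one gets $\mathcal D\approx N^2/2$, so $R_w^1\to2$ from below, which makes the bound $2$ asymptotically sharp.

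The step I expect to be the main obstacle is the upper bound. The lower bound is a one-line expansion, but $R_w^1\le2$ cannot be obtained from the closed form by algebra alone: one has to insert the near-balanced split of Theorem~2 and then carefully manage the rounding operators $\lceil\cdot\rfloor$, the parity of $N_2$, and the two regimes $\eta=1$ and $\eta\ge2$ separately, tracking in each case exactly how $|N_1-N_2|$ is controlled so that $(N_1-N_2)^2\le4(N-1)$ never fails. A secondary nuisance is the low end of the range near $N=2$, where the subarrays degenerate, the floor/ceiling conventions must be pinned down, and the bounds $1$ and $2$ verified case by case rather than via the asymptotic estimate.
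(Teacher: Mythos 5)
Your route is genuinely different from the paper's. The paper (Appendix B) substitutes the optimal sensor split into $R_w^1$, treats the result as a function of $N$ alone, evaluates it at the endpoints ($R_w^1(2)=1$, $\lim_{N\to\infty}R_w^1(N)=2$), and closes the interval by showing $\partial R_w^1/\partial N>0$, so that $R_w^1$ increases monotonically from $1$ to $2$. You instead prove both inequalities pointwise: after eliminating the rounding operators via the parity identity $\eta+\delta\in\{N_2-1,\,N_2\}$, the lower bound reduces to $N_1(N_1-2)+N_2(N_2-2)\ge0$ and the upper bound to $(N_1-N_2)^2\le4(N-1)$. This buys something real: the paper's monotonicity step differentiates, with respect to the integer $N$, an expression containing $\lceil\cdot\rceil$ and $\lfloor\cdot\rfloor$ and implicit integer dependences of $N_1,N_2$ on $N$, which is at best a statement about a continuous relaxation; your squeeze needs no such relaxation and makes explicit exactly which structural property of the array (near-balance of $N_1$ and $N_2$) the upper bound hinges on. The paper's argument is shorter and also exhibits the asymptotic sharpness of the bound $2$, which you recover as a consistency check.

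The unresolved point --- which you correctly flag as the main obstacle but do not close --- is the verification that $(N_1-N_2)^2\le4(N-1)$ under the actual LR-SDA parameters. This is not a formality. Theorem 2 states $N_2=\lceil\frac{2N-1}{4}\rfloor-\eta$ (for $\eta\ge2$) while Definition 8 sets $\eta=\lceil\frac{N_2}{2}\rceil-1$; read literally, these force $N_2+\lceil N_2/2\rceil-1=\lceil\frac{2N-1}{4}\rfloor$, i.e.\ $N_2\approx N/3$ and $N_1\approx 2N/3$, in which case $|N_1-N_2|\approx N/3$ grows linearly, your inequality fails for large $N$, and in fact $R_w^1\to 9/4>2$, contradicting the corollary itself. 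Only under the balanced reading $N_1\approx N_2\approx N/2$ (which is what the proof of Theorem 2 actually derives before the $-\eta$ appears, and what Appendix B implicitly uses to obtain the limit $2$) does $|N_1-N_2|$ stay bounded and your argument go through. So your proof cannot be completed as written without first pinning down which split the corollary is about; once that is fixed to the balanced one, the remaining work (bounding $|N_1-N_2|$ by an absolute constant and checking the finitely many small $N$) is routine.
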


\begin{proof}
See Appendix B.
\end{proof}

\section{Performance Comparison}
In this section, we provide numerical simulations to demonstrate the superior performance of
LR-SDA in terms of DOF, resolution and the RMSE versus the input SNR, snapshots and the number of sources.
Note that in all DOA estimations, the spatial smoothing MUSIC algorithm \cite{Pal22011}, \cite{Liu2015}, \cite{Piya2012},
\cite{You2021} is used to estimate DOA.
Moreover, we assume that all incident sources have equal power and the number of sources is known.
To evaluate the results quantitatively, the root-mean-square error (RMSE) of the estimation DOAs is defined as an
average over 1000 independent trials:

\begin{equation}
\text{RMSE}=\sqrt{\frac{1}{1000D}\sum_{j=1}^{1000}\sum_{i=1}^{D}(\hat{\theta}_i^{j}-\theta_i)^2},
\end{equation}
where $\hat{\theta}_i^{j}$ is the estimate of $\theta_i$ for the $j^{th}$ trial. Similar to \cite{LiuCL2016},
we focus on the DOF obtained by different arrays,
rather than the array aperture, to investigate the overall estimation performance.

\subsection{Comparison of the DOF for Different Arrays}

We compare the DOF of the proposed method with those of NADiS \cite{GuptaP2018},
TNA-I, TNA-II \cite{WangY2020}, TS-ENA \cite{Yang2023} and GENAMS for given the fixed number of physical sensors,
where all other second-order DCAs adopt the  array structure for obtaining maximum DOF.
The comparing results are listed in Table II and the variations of DOF for six methods are shown in Fig. 3.
The results in Table II show that the DOF of LR-SDA have increased compared to other five methods under the different given number of physical sensors.
In addition, it can be seen that as the number of sensors $N$ increases, the DOF of the proposed LR-SDA is always more than
those of the other five DCAs in Fig. 4, and the advantages of the proposed array are more obvious with a large number of physical sensors.

\begin{figure}[h]
 \center{\includegraphics[width=6cm]  {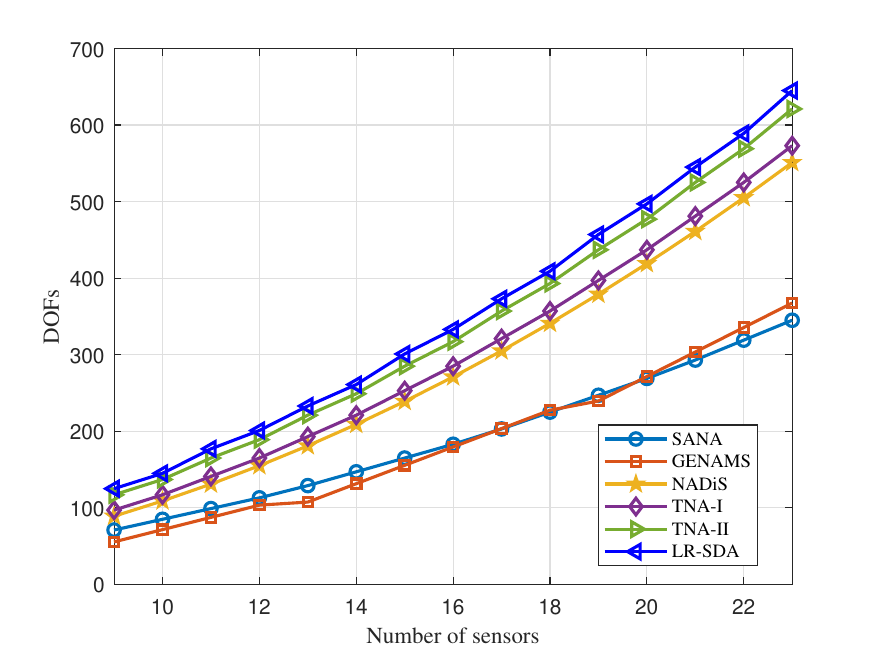}}
 \caption{\label{1} DOF of different arrays}
\end{figure}

\begin{table}[h]
\begin{center}
\caption{COMPARISON OF THE DOF FOR DIFFERENT ARRAYS
BASED ON SECOND-ORDER CUMULANT.}
\label{tab1}
\renewcommand{\arraystretch}{0.9}
\setlength{\tabcolsep}{7pt}
\begin{tabular}{ c  c  c  c }
\hline
\hline
\textbf{Array} ~~~~~~& \textbf{Hole-Free} & \textbf{Number of} &~~~~ \textbf{DOF}\\
\textbf{Structure}~~~~~~& \textbf{Co-Array}&\textbf{Sensors} &~~~~ \\
\hline
TS-ENA & Yes & $N$ & ${\mathcal{D}_1}^a$\\
GENAMS & Yes & $N$ & ${\mathcal{D}_2}^b$ \\
NADiS & Yes & $N_1+N_2$ & ${\mathcal{D}_3}^c$ \\
TNA-I & Yes &$N_1+N_2$  &${\mathcal{D}_4}^d$ \\
TNA-II & Yes & $N_1+N_2$ & ${\mathcal{D}_5}^e$\\
LR-SDA & Yes & $N_1+N_2$ & ${\mathcal{D}_6}^f$\\
\hline
\hline
\textbf{Array}~~~~~~ & \textbf{P or} & \textbf{Number of} & \textbf{DOF}\\
\textbf{Structure}~~~~~~ & \textbf{($N_1$,$N_2$)}& \textbf{Sensors}& \\
\hline
TS-ENA & (6,2) & 9 & 71\\
GENAMS & 7 & 9 & 59\\
NADiS & (5,4) & 9 & 89\\
TNA-I & (5,4) & 9 & 97\\
TNA-II & (5,4) & 9 & 101\\
LR-SDA & (5,4) & 9 & \textbf{109}\\
\hline
TS-ENA & (10,8) & 19 & 247\\
GENAMS & 15 & 19 & 247\\
NADiS & (10,9) & 19 & 379\\
TNA-I & (10,9) & 19 & 397\\
TNA-II & (10,9) & 19 & 415\\
LR-SDA & (10,9) & 19 & \textbf{425}\\
\hline
TS-ENA & (14,13) & 28 & 489\\
GENAMS & 19 & 28 & 541\\
NADiS & (14,14) & 28 & 811\\
TNA-I & (14,14) & 28 & 837\\
TNA-II & (14,14) & 28 & 865\\
LR-SDA & (14,14) & 28 & \textbf{879}\\
\hline
\hline
\end{tabular}
\end{center}
\footnotesize{$^a$ $2(N_1+1)N_2+7N_1+1$}\par
\footnotesize{$^b$ $2(4\frac{{P_3}+1}{4}+2\frac{P_3-1}{2}\frac{P_3+1}{4}+P_3(N-P_3)-2-\frac{P_3-1}{2}), P_3=4\lfloor\frac{N+5}{6}\rfloor$}\par
\footnotesize{$^c$ $N^2+N-1$}\par
\footnotesize{$^d$ $4N_1N_2+4N_1-3)$}\par
\footnotesize{$^e$ $4N_1N_2+4N_1+2N_2-3$ }\par
\footnotesize{$^f$ $-4N_2^2+(4N-4)N_2+4N+4\delta+1$}\\
\end{table}

\subsection{Redundancy of Different Array Structures}

The redundancy is an important indicator to measure whether the DOF of the current array structure can be further enhanced,
which is compared among NADiS, TNA-I, TNA-II, TS-ENA, GENAMS and LR-SDA in the simulation.
Redundancies of six arrays with the varying of the number of sensors are shown on the Fig. 5.
It can be seen that the redundancies of GENAMS and TS-ENA designed based on DCA is the smaller
than those of arrays designed based on SCA and DCA.
In addition, the redundancy of proposed LR-SDA is the smallest than those of other arrays designed based on SCA and DCA.

\begin{figure}
 \center{\includegraphics[width=6cm]  {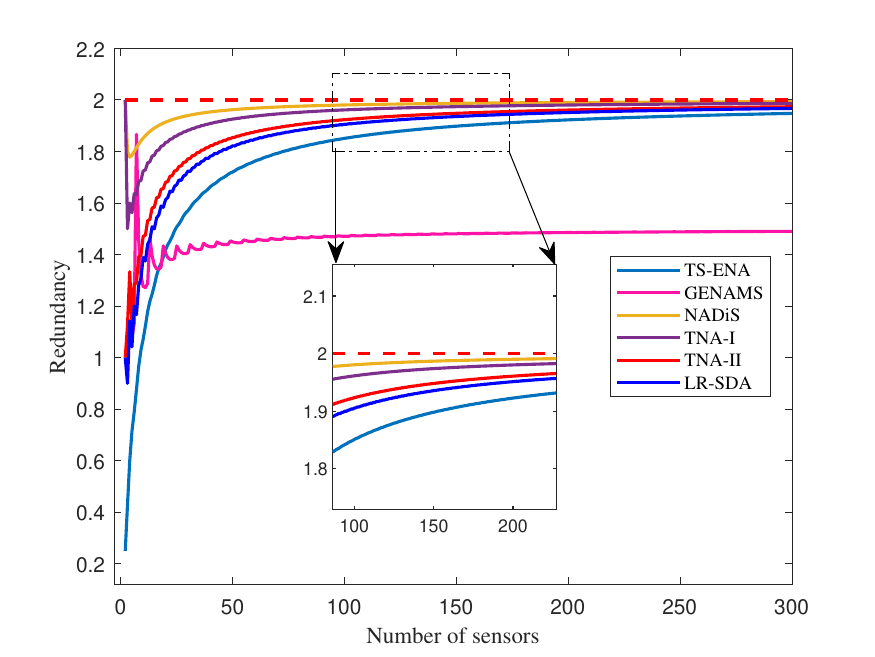}}
 \caption{\label{1} Redundancy of different arrays}
\end{figure}

\subsection{DOA Estimation Based on Second-Order Cumulants}
We compare the DOA estimation performance versus input SNR, snapshots and the number of sources for LR-SDA to
those of other five DCAs in this part,
where 11 physical sensors are used to construct six co-arrays.

Firstly, there are 20 uncorrelated sources uniformly located at $-60^{\circ}$ to $60^{\circ}$,
and the SNR and snapshots are set as 0 dB and 6000, respectively.
The DOA estimation results are shown in Fig. 6, where six arrays are capable of resolving all 20 sources.
However, the LR-SDA exhibits lowest valley near both ends than those of other five DCAs which can improve the performance of DOA estimation.

\begin{figure}
  \centering
  \subfigure[]{
    \label{fig:subfig:onefunction}
    \includegraphics[scale=0.17]{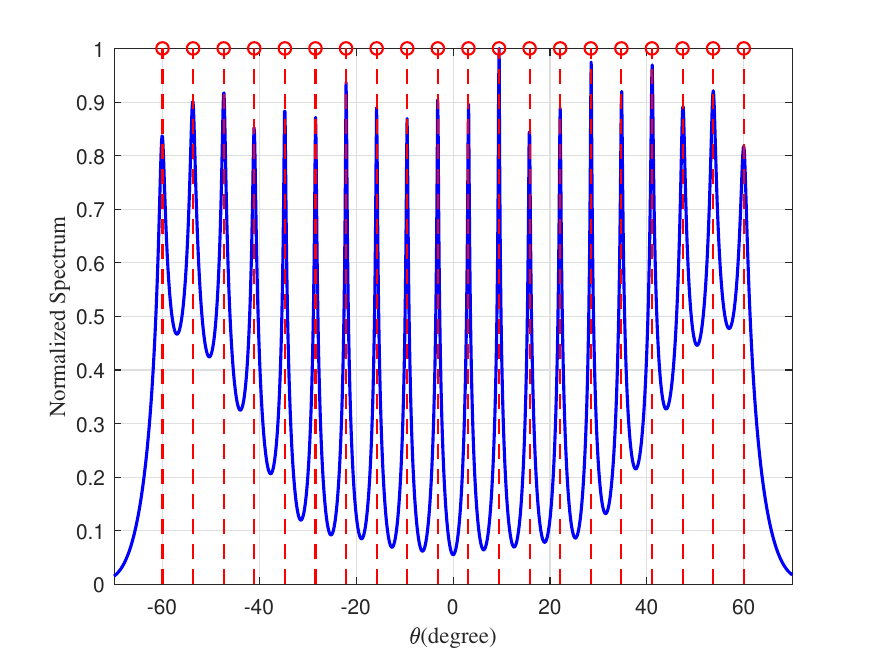}}
  \hspace{0in} 
  \subfigure[]{
    \label{fig:subfig:threefunction}
    \includegraphics[scale=0.17]{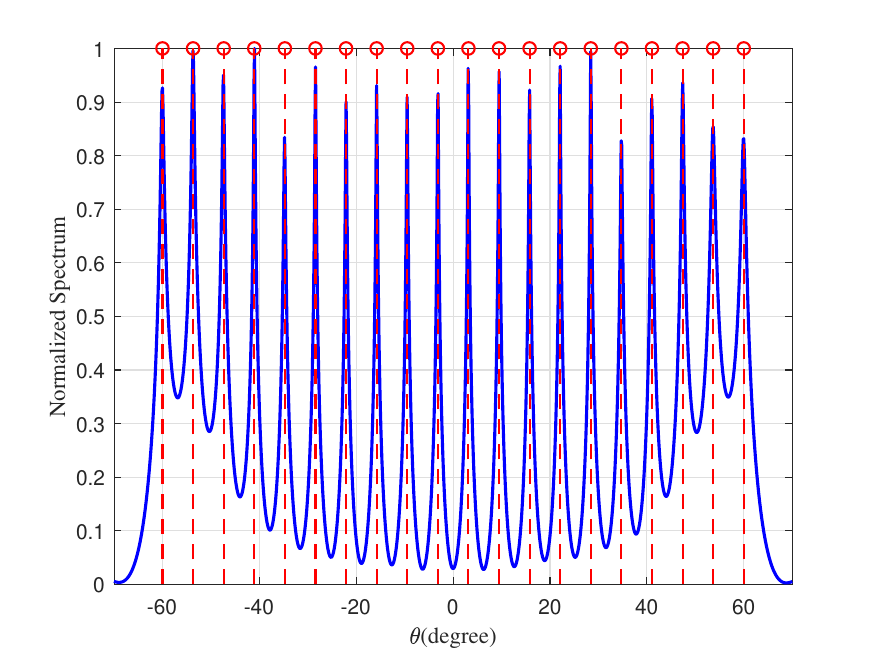}}
  \subfigure[]{
    \label{fig:subfig:threefunction}
    \includegraphics[scale=0.17]{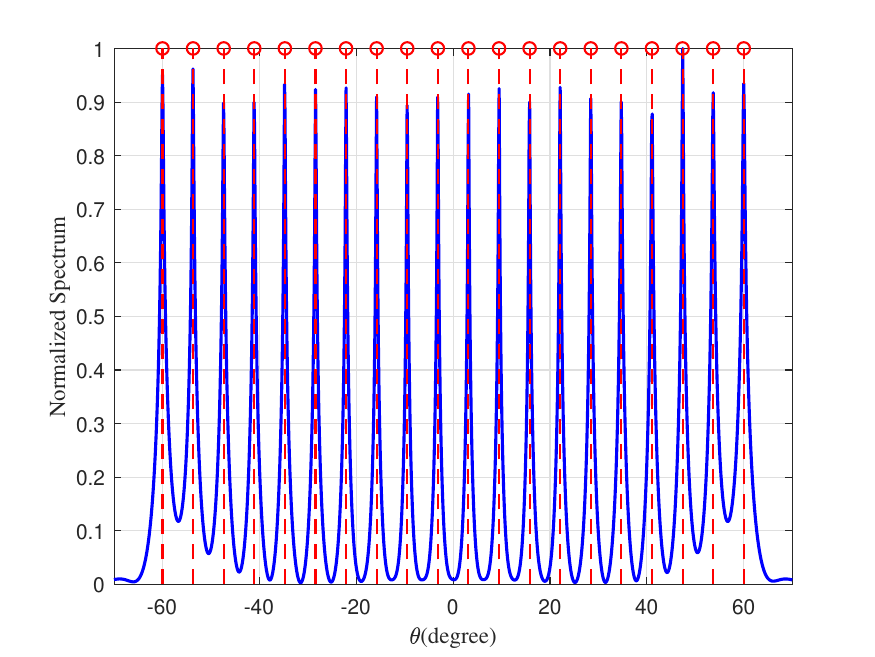}}
  \subfigure[]{
    \label{fig:subfig:threefunction}
    \includegraphics[scale=0.17]{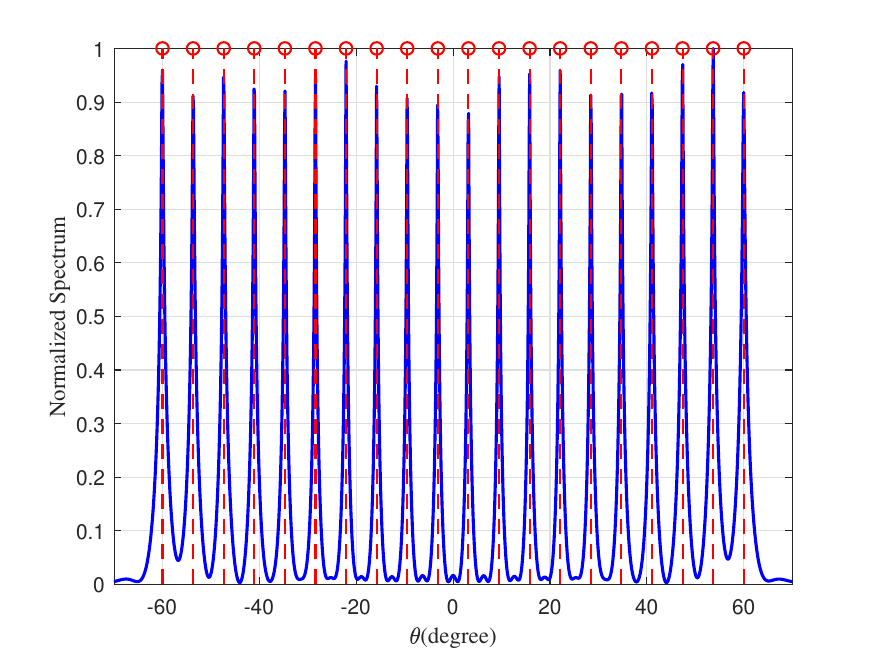}}
  \subfigure[]{
    \label{fig:subfig:threefunction}
    \includegraphics[scale=0.17]{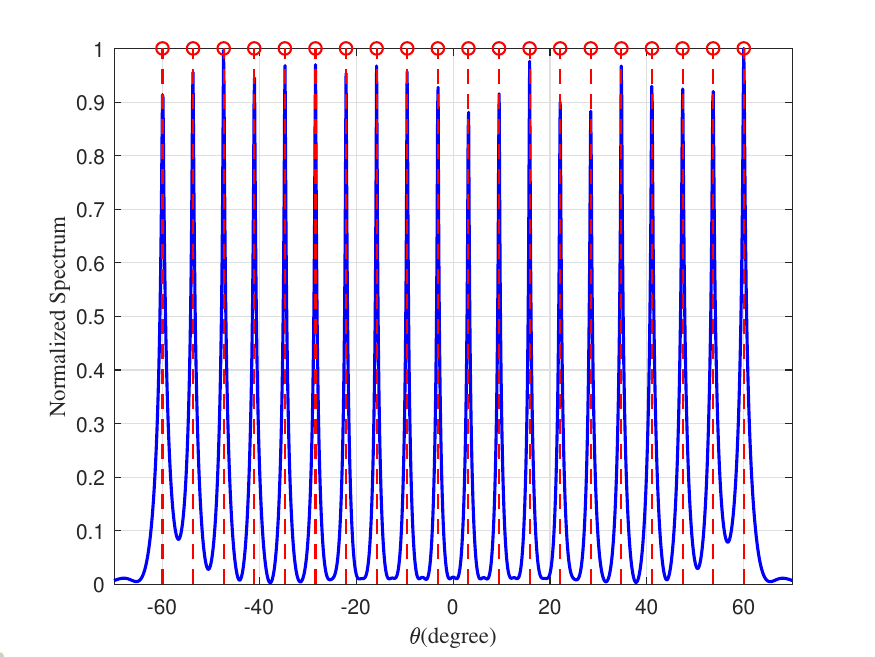}}
  \subfigure[]{
    \label{fig:subfig:threefunction}
    \includegraphics[scale=0.16]{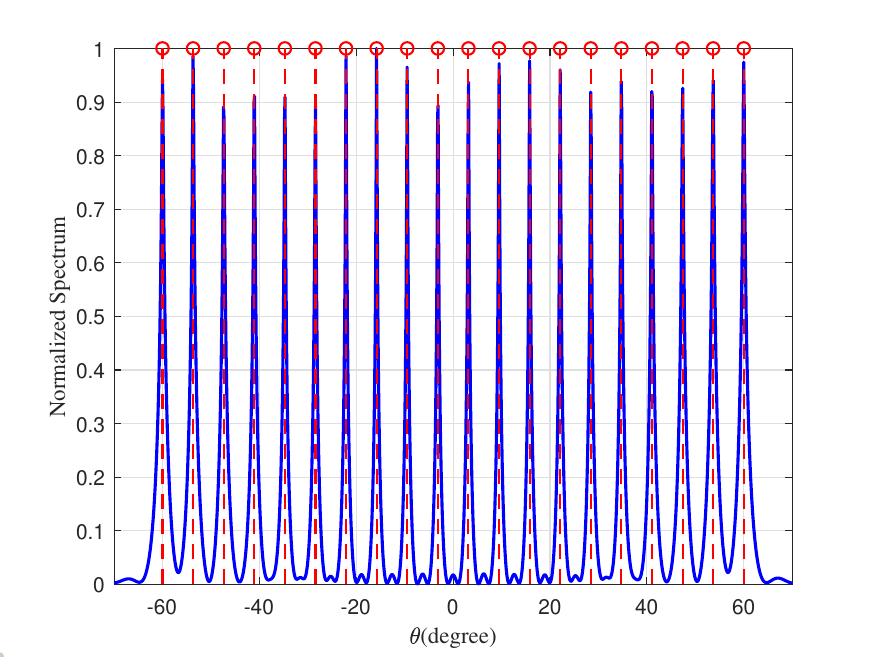}}
  \caption{DOA estimation result for five arrays with 11-sensors when 20 sources are uniformly located at $-60^0$ to $60^0$.
$SNR = 0$ dB and $K = 10000$. (a) GENAMS. (b) TS-ENA. (c) NADiS. (d) TNA-I. (e) TNA-II. (f) LR-SDA.}
\end{figure}

Secondly, the RMSE versus the input SNR, snapshots and the number of sources are studied in the following numerical simulations.
In the first numerical simulation, there are 12 uncorrelated sources uniformly located at $-60^{\circ}$ to $60^{\circ}$
and the snapshots setting as 12000. The SNR ranges from -10dB to 10dB with an interval of 2dB.
The results of RMSE versus SNR for different arrays are shown in Fig. 7(a),
where it can be seen that as the SNR increases, the RMSEs of all arrays decrease,
however the RMSE of LR-SDA remaining the lowest.
The second numerical simulation studies the DOA estimation performance with respect to the snapshots changing from 8000 to 18000
and the SNR setting as 2dB.
The results of RMSE versus snapshots are shown in Fig. 7(b), and a similar conclusion can be obtained that
the RMSE of LR-SDA is significantly lower than those of other five DCAs.
In the third numerical simulation, the number of sources change from 10 to 20.
The results of RMSE versus the number of sources are shown in Fig. 7(c),
where it can be seen that as the number of sources increases, the RMSEs of GENAMS and TS-ENA increase steeply
than those of NADiS, TNA-I, TNA-II and LR-SDA.
Notably, the RMSE of LR-SDA remains the smallest compared to other five DCAs, indicating its superior performance.
\begin{figure}
  \centering
  \subfigure[ RMSE versus SNR]{
    \label{fig:subfig:onefunction}
    \includegraphics[scale=0.17]{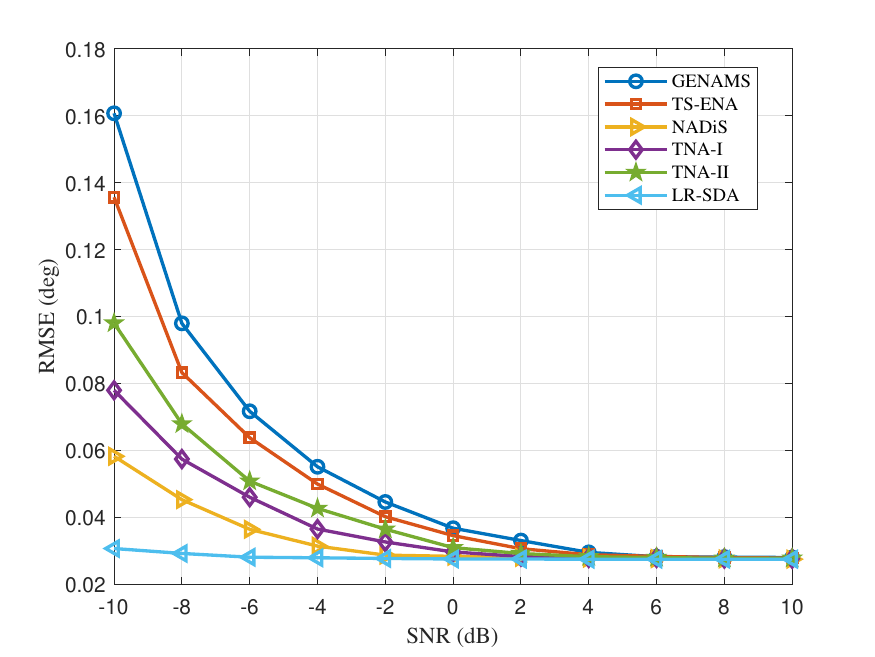}}
   \hspace{0in} 
  \subfigure[RMSE versus Snapshots]{
    \label{fig:subfig:threefunction}
    \includegraphics[scale=0.17]{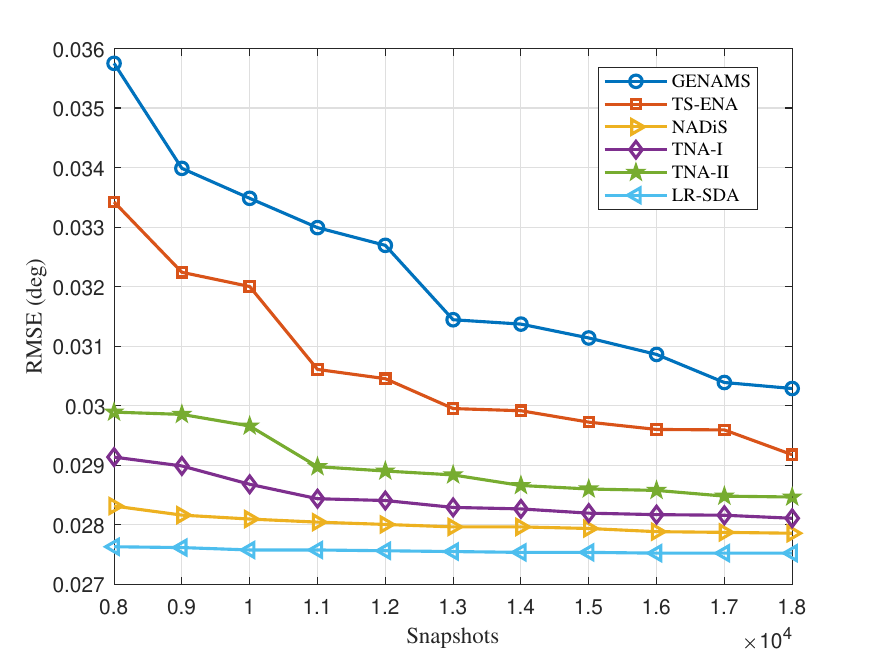}}
    \hspace{0in} 
  \subfigure[RMSE versus Number of Sources]{
    \label{fig:subfig:threefunction}
    \includegraphics[scale=0.17]{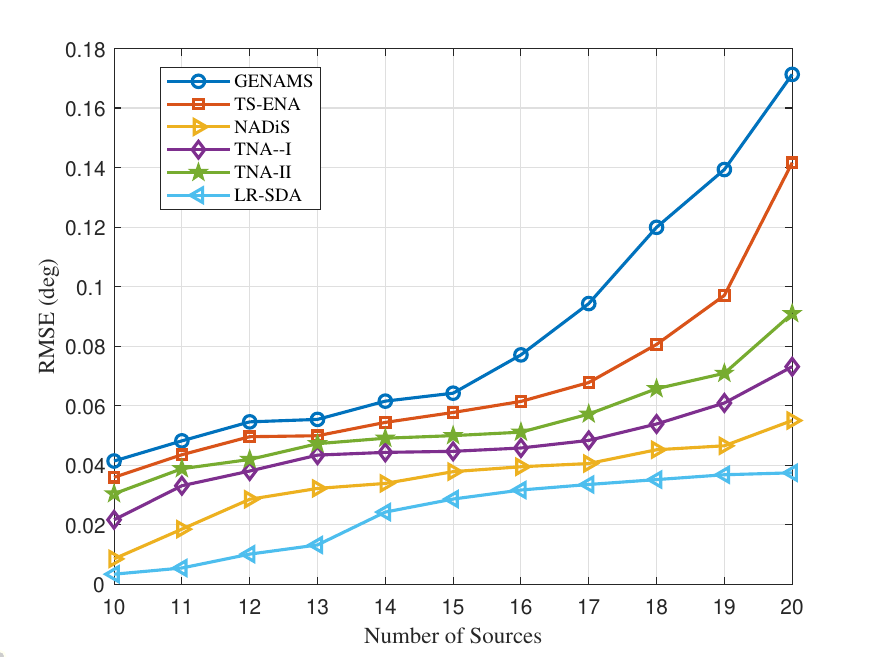}}
  \caption{DOA estimation performance based on second-order cumulants}
\end{figure}

\section{Conclusion}
The paper exploits second-order cumulant to devise the SO-ECA for non-circular signals, whose redundancy is defined in the paper.
Further a novel SLA, namely LR-SDA, is proposed based on SO-ECA,
which can enhance the DOF and improve the DOA estimation performance.
Specifically, LR-SDA consists of three ULAs with a given number of sensors.
Sub-array 1 is a ULA with big inter-spacing between sensors,
and sub-array 2 and 3 are ULA with unit inter-spacing between sensors.
This arrangement can yield closed-form expressions for the sensor positions of the proposed LR-SDA.
Therefore, based on this design, the proposed LR-SDA offers larger DOF than those of other existing DCAs.
The enhanced DOF increase the number of resolvable sources with the given number of physical sensors.
In addition, the necessary and sufficient conditions of signal reconstruction for LR-SDA are derived in this paper.
Further the weight function and redundancy of LR-SDA are defined,
whereas the redundancy of LR-SDA is the lowest compared to other three DCAs designed based on sum-difference co-arrays.

\section*{Acknowledgment}
This work was supported by the National Natural Science Foundation of China (Grant No. 62371400).\\

\appendix

\section{Derivation process of $N_1$ in (\ref{wh7})}

Firstly, according to the problem (P2), $N_2$ can be rewritten as follows when $\eta=1$
\begin{equation}
\label{st7}
N_2=\frac{E-2N_1-2\delta}{2N_1}.
\end{equation}

Substituting (\ref{st7}) into $N$ in (P2), the following equation can be obtained
\begin{equation}
f_1(N_1)\triangleq N_1+N_2=\frac{2N_1^2-2N_1+E-2\delta}{2N_1}.
\end{equation}

In order to know the monotonicity of function $f_1(N_1)$, we solve the first derivation of $f_1(N_1)$ as follows
\begin{equation}
\label{st8}
\frac{\partial f_1(N_1)}{\partial N_1}=\frac{4N_1^2-2E+4\delta}{4N_1^2},
\end{equation}
where the denominator $4N_1^2\geq0$, therefore the sign of the first derivative for $f_1(N_1)$ is determined by the numerator in (\ref{st8}).
Further, the numerator in (\ref{st8}) is a quadratic function, thus the sign is determined by $E$.
To further discuss the positive and  negative polarity of the first derivative of $f_1(N_1)$,
setting $g_1(N_1)\triangleq 4N_1^2-2E+4\delta$.
Since $N_1\geq1$, solving the critical point of $g_1(N_1)$ when $g_1(N_1)=0$ as follows
\begin{equation}
N_1^*=\frac{\sqrt{2}\sqrt{E-2\delta}}{2}.
\end{equation}

When $N_1> N_1^*$, $\frac{\partial f_1(N_1)}{\partial N_1}<0$. At this time, $f_1(N_1)$ is monotonically decreasing.
When $N_1< N_1^*$, $\frac{\partial f_1(N_1)}{\partial N_1}>0$. At this time, $f_1(N_1)$ is monotonically increasing.
Therefore, we can get the minimize value of $f(N_1)$ at the critical point $N_1^*$.

Secondly, when $\eta\geq 2$, according to the problem (P2), $N_2$ can be rewritten as follows
\begin{equation}
\label{st9}
N_2=\frac{E-2N_1-2\delta-2\eta}{2N_1}.
\end{equation}

Substituting (\ref{st9}) into $N$ in (P2), the following equation can be obtained
\begin{equation}
f_2(N_1)\triangleq N_1+N_2=\frac{2N_1^2+E+2-N_1-2\delta}{2N_1+1}.
\end{equation}

In order to know the monotonicity of function $f_2(N_1)$, we solve the first derivation of $f_2(N_1)$ as follows
\begin{equation}
\label{st10}
\frac{\partial f_2(N_1)}{\partial N_1}=\frac{4N_1^2+4N_1-2E+4\delta-5}{(2N_1+1)^2},
\end{equation}
where the denominator $(2N_1+1)^2\geq0$, therefore the sign of the first derivative for $f_2(N_1)$ is determined by the numerator in (\ref{st10}).
Further, the numerator in (\ref{st10}) is a quadratic function, thus the sign is determined by $E$.
To further discuss the positive and  negative polarity of the first derivative of $f_2(N_1)$,
setting $g_2(N_1)=4N_1^2+4N_1-2E+4\delta-5$.
Since $N_1\geq1$, solving the critical point of $g_2(N_1)$ when $g_2(N_1)=0$ as follows
\begin{equation}
N_1^*=\frac{-1+\sqrt{2}\sqrt{E+3-2\delta}}{2}.
\end{equation}

When $N_1> N_1^*$, $\frac{\partial f_2(N_1)}{\partial N_1}<0$. At this time, $f_2(N_1)$ is monotonically decreasing.
When $N_1< N_1^*$, $\frac{\partial f_2(N_1)}{\partial N_1}>0$. At this time, $f_2(N_1)$ is monotonically increasing.
Therefore, we can get the minimize value of $f_2(N_1)$ at the critical point $N_1^*$.

\section{The Proof of Corollary 1}

In order to further investigate the redundancy of LR-SDA,
the relaxed problem of the redundancy $R_w^1(N)=\frac{N^2}{N_1(N_2+1)+(\lceil \frac{N_2}{2} \rceil-1)+\delta}$ is studied.
And the upper and lower bounds of redundancy of LR-SDA when the number of physical sensors
varies from 2 to infinity is analyzed in detail as follows.

Firstly, when there are two or three physical sensors for LR-SDA, the redundancies of LR-SDA are
\begin{equation}\nonumber
R_w^1(2)=1,\ \ \ \  R_w^1(3)=1.125.
\end{equation}

When the number of physical sensors $N\rightarrow \infty$, the redundancy of LR-SDA is
\begin{equation}\nonumber
\begin{aligned}
\lim_{N\rightarrow \infty} R_w^1(N)&=\lim_{N\rightarrow \infty}\frac{N^2}{N_1(N_2+1)+(\lceil \frac{N_2}{2} \rceil-1)+\delta}=2.\\
\end{aligned}
\end{equation}

And the monotonicity of $R_w^1(N)$ is analyzed as follows
\begin{equation}\nonumber
\begin{aligned}
\frac{\partial R_w^1(N) }{\partial N}=\frac{N(4N^2+4N-3)}{2(N_1(N_2+1)+(\lceil \frac{N_2}{2} \rceil-1)+\delta)^2}.
\end{aligned}
\end{equation}

According to the form of $\frac{\partial R_w^1(N) }{\partial N}$, it is can be known that the positive and negative polarities
are determined by the numerator of $\frac{\partial R_w^1(N) }{\partial N}$. Further, for $4N^2+4N-3=0$,
it more than zero when $N\geq 2$, therefore $\frac{\partial R_w^1(N) }{\partial N}>0$,
which means $R_w^1(N)$ strictly increasing.

Therefore, the upper and lower bounds of the redundancy $R_w^1$ with the number of physical sensors varying from 2 to infinity are
\begin{equation}\nonumber
\begin{aligned}
1\leq R_w^1\leq 2.
\end{aligned}
\end{equation}

\end{document}